\newtheorem{theorem}{Theorem}[section]
\newtheorem{lemma}[theorem]{Lemma}
\theoremstyle{thmstyleone}%
\theoremstyle{thmstyletwo}%
\theoremstyle{thmstylethree}%
\newtheorem{definition}{Definition}%
\begin{document}

\title[Article Title]{Independent Approximates provide a maximum likelihood estimate for heavy-tailed distributions}


\author*[1,]{\fnm{Amenah} \sur{Al-Najafi}}\email{amenah.alnajafi@gmail.com}

\author[2]{\fnm{Ugur} \sur{Tirnakli}}\email{ugur.tirnakli@ieu.edu.tr}
\equalcont{These authors contributed equally to this work.}

\author[3]{\fnm{Kenric} \sur{P. Nelson}}\email{kenric.nelson@photrek.io}
\equalcont{These authors contributed equally to this work.}

\affil*[1]{\orgdiv{Department}, \orgname{Organization}, \orgaddress{\street{Street}, \city{City}, \postcode{100190}, \state{State}, \country{Country}}}

\affil[1]{\orgdiv{Department of Mathematics}, \orgname{University of Kufa}, \orgaddress{\street{299G}, \city{Najaf}, \country{Iraq}}}

\affil[2]{\orgdiv{Department of Physics, Faculty of Arts and Sciences}, \orgname{Izmir University of Economics}, \orgaddress{\street{35330}, \city{Izmir}, \country{Turkey}}}

\affil[3]{\orgname{Photrek, LLC}, \orgaddress{\street{56 Burnham St Unit 1}, \city{Watertown, MA}, \country{USA}}}


\abstract{

Heavy-tailed distributions are infamously difficult to estimate because their moments tend to infinity as the shape of the tail decay increases. Nevertheless, this study shows that a modified group of moments can be used to determine a maximum likelihood estimate of heavy-tailed distributions. These modified moments are determined from powers of the original distribution. Within nonextensive statistical mechanics, this has been referred to as the escort distribution. Here we clarify that this is the distribution of Independent-Equals, the independent random variables sharing the same state. The $n$th-power distribution is guaranteed to have finite moments up to $n-1$. Samples from the $n$th-power distribution are drawn from $n$-tuple Independent Approximates, which are the set of independent samples grouped into n-tuples and sub-selected to be approximately equal to each other. We show that Independent Approximates are a maximum likelihood estimator for the generalized Pareto and the Student’s t distributions, which are members of the family of coupled exponential distributions. We use the first (original), second, and third power distributions to estimate their zeroth (geometric mean), first, and second power-moments respectively. In turn, these power-moments are used to estimate the scale and shape of the distributions. A least absolute deviation criteria is used to select the optimal set of Independent Approximates. Estimates using higher powers and moments are possible though the number of $n$-tuples that are approximately equal may be limited.}

\keywords{Independent Approximate, coupled exponential, coupled Gaussian, least absolute deviation}



\maketitle

\section{Introduction}\label{sec1}

Heavy-tailed distributions play a crucial role in characterizing the statistics of complex systems and find applications in various fields, including finance, climatology, telecommunication, genetics, and more \cite{resnickHeavyTailPhenomenaProbabilistic2007}, \cite{merzUnderstandingHeavyTails2022}, \cite{ibragimovHeavyTailedDistributionsRobustness2015} and \cite{bradleyFinancialRiskHeavy2003}. Additionally, the Student's t distribution has been employed to model logarithmic returns in simulated markets \cite{granhaOpinionDynamicsFinancial2022}. These distributions have tails that decay slower than the exponential, producing outlier samples that cause some moments to diverge, which makes their estimation challenging. Various tail index estimators exist, but they are sensitive to the choice of the upper order statistics parameter $( k )$, making optimal selection difficult.

Nonextensive statistical mechanics (NSM) 
\cite{tsallisFoundationsStatisticalMechanics2017,abeNonextensiveStatisticalMechanics2001a} has been proposed to model the statistics of complex systems; however the physical property of its generalizing index \(q\) is typically undefined.  In this paper, we clarify that the escort probability, $P^{(q)}\equiv \frac{p_i^q}{\sum_jp_j^q}$, is the distribution of \textit{q} random variables sharing the same state, which we refer to as the distribution of \textit{Independent} \textit{Equals.} The estimation method based on this property, Independent Approximate (IA) estimation method \cite{nelsonIndependentApproximatesEnable2022}, is optimized to achieve a maximum likelihood estimation. In this paper, we focus a filtering method that produces independent approximates with an integer values of $q$, particularly $P^{(2)}$ and $P^{(3)}$ but this motivates further investigation to define a sampling method for fractional independent approximates. Utilizing the least absolute deviations (LAD), we obtain the optimal subsample set for statistical analysis. We assess estimator performance using metrics such as the Coefficient of Efficiency, Average Deviation (AD), Cramér–von Mises (CvM), and Negative Log-Likelihood (NLL).

While the NSM methodologies have focused on the independent-equals $q$ as a defining property, we suggest that refocusing on the degree of nonlinearity $\kappa$ provides a framework more closely grounded in the fundamental property of a complex system. This approach was previously introduced by one of the authors (Nelson) as the nonlinear statistical coupling (NSC) \cite{nelsonNonlinearStatisticalCoupling2010, nelsonAverageUncertaintySystems2017}. The NSC or coupling $\kappa$ is equal to the the shape parameter of the generalized Pareto distribution (GPD), and the inverse of the degree of freedom of the Student's t distribution. Furthermore the coupling is either equal to or closely related to the nonlinearity of stochastic systems which deviate from exponential decaying noise \cite{beckDynamicalFoundationsNonextensive2001, nelsonOpenProblemsNonextensive2024}.  

By reframing NSM around the nonlinearity of a complex system, the methods can be grounded in longer-standing traditions of statistics (such as the GPD and Student's t) and the central issues regarding characterization of complex systems. Section 2 provides background on the coupled distribution family, nonlinear sources of heavy-tailed distributions, and estimation methods. In Section 3, the IA estimation method is defined. In Section 4, the performance of the IA method as a MLE and its application to estimation of several nonlinear simulations are documented. Section 5, provides a discussion and conclusion regarding the relationship between nonlinearity and independent-equals, and how this relationship enables a maximum likelihood estimation method using independent approximates

\section{Background}\label{background}
\subsection{The Coupled Distribution Family}

We will examine the performance of the IA estimator for the generalized Pareto and the Student's t distributions. To facilitate the discussion, we begin by recalling the broader family of \textit{q}-distributions introduced in the context of nonextensive statistical mechanics.
The family of Tsallis \textit{q}-probability density functions (PDF) is defined as:
\[
f(x;q, \beta, \alpha)=\frac{1}{Z_q\left( \beta,\alpha \right)} \left[1+(1-q)\beta x^{\alpha}\right]^{1 /(1-q)},
\]
where $(\alpha=1)$ corresponds to the \textit{q}-exponential distribution and $(\alpha=2)$ to the \textit{q}-Gaussian distribution.

Since we wish to model the statistics of complex systems with its foundational property, the nonlinearity, rather than the number of independent-equals, we reparameterize this family in two steps. We'll show that this is achieved by starting with the survival function for the GPD. The shape parameter, $\kappa$, defines the nonlinearity of the generalized logarithm and the rate of decay of the generalized exponential:
$$
\ln_\kappa{x}\equiv\frac{1}{\kappa}\left(x^\kappa-1\right); \quad x>0,\quad \exp_\kappa{x}\equiv\left(1+\kappa x\right)_+^{\frac{1}{\kappa}}, \quad \left(a\right)_+=\max\left(0,a\right).
$$
Given the connection to the nonlinearity, which we'll expand upon in the next subsection, the shape parameter is also called the nonlinear statistical coupling or coupling for short. The reciprocal of the generalized exponential function is the survival function (SF) (1 - CDF) of generalized Pareto distribution:
$$
S\left(x;\mu,\sigma,\kappa\right)=1-F\left(x;\mu,\sigma,\kappa\right)\equiv\left(\exp_\kappa{\left(\frac{x-\mu}{\sigma}\right)}\right)^{-1}.
$$
The coupled distribution is defined such that the power $\alpha$ of the variable $x$ is one for the generalized Pareto distribution and two for the Student's t distribution. The symmetric two-sided version of the coupled CDF, like the Student's t CDF, is a function of the regularized incomplete beta function, $I(z;a,b)$  
\begin{equation}\label{CD-CDF}F\left(x;\mu,\sigma,\kappa,\alpha\right)\equiv\left\{\begin{matrix}\frac{1}{2}I\left(\frac{1}{1+\kappa\left|\frac{x-\mu}{\sigma}\right|^\alpha},\frac{1}{\alpha\kappa},\frac{1}{\alpha}\right)&x\le\mu\\
\frac{1}{2} + \frac{1}{2} I\left(\frac{\kappa\left|\frac{x-\mu}{\sigma}\right|^\alpha}{1+\kappa\left|\frac{x-\mu}{\sigma}\right|^\alpha},\frac{1}{\alpha},\frac{1}{\alpha\kappa}\right)& \text{True}\\ \end{matrix}\right.\end{equation}
The two-sided coupled PDF is defined as
\begin{equation}\label{CD-PDF}f\left(x;\mu,\sigma,\kappa,\alpha\right)\equiv\left\{\begin{matrix}\frac{1}{Z\left(\sigma,\kappa,\alpha\right)}\left(1+\kappa\left|\frac{x-\mu}{\sigma}\right|^\alpha\right)_+^{-\frac{1+\kappa}{\alpha\kappa}}&\kappa\neq0\\\frac{1}{Z\left(\sigma,\alpha\right)}\exp\left(-\frac{1}{\alpha}\left|\frac{x-\mu}{\sigma}\right|^\alpha\right)&\kappa=0\\\end{matrix}\right.,\sigma\geq0,-1\le\kappa,0<\alpha<2\end{equation}
where the generalized Pareto (coupled exponential) distribution is $\alpha = 1$ and the Student's t (coupled Gaussian) distribution is $\alpha =2$. The coupling is the inverse of the degree of freedom for the Student's t. Although we focus on the univariate case in this paper, the coupled distribution can be extended to the multivariate setting by replacing the univariate deviation term with a generalized Mahalanobis form, $((x - \mu)^T \Sigma^{-1} (x - \mu))^{\alpha/2}$. This formulation enables flexible control of tail decay via the parameter $\alpha$, generalizing the standard Mahalanobis distance ($\alpha = 2$) to a broader class of heavy-tailed multivariate distributions. For the multivariate distribution the exponent is $-\frac{1+d\kappa}{\alpha\kappa}$, where $d$ is the number of dimensions.

The normalization or partition function, $Z$, is
\begin{equation}\label{Z}Z\left(\sigma,\kappa,\alpha\right)\equiv\left\{\begin{matrix}\sigma&\alpha=1\\\begin{matrix}\frac{\sigma\ B\left(\frac{1}{2\kappa},\ \frac{1}{2}\right)}{\sqrt\kappa}&\kappa>0\\\sigma\sqrt{2\pi}&\kappa=0\\\frac{\sigma\ B\left(\frac{-1+\kappa}{2\kappa},\ \frac{1}{2}\right)}{\sqrt{-\kappa}}&-1\le\kappa<0\\\end{matrix}&\alpha=2\\\end{matrix}\right.\end{equation}
where $B$ is the (incomplete, $z\neq1$) Beta function, $B\left(z,a,b\right)=\int_{0}^{z}{t^{a-1}\left(1-t\right)^{b-1}dt}$ and the regularized incomplete beta function is $I(z;a,b)=\frac{B(z;a,b)}{B(a,b)}$. $\sigma$ is the scale parameter, equal to the standard deviation when $\kappa=0$, and corresponds to the "knee" in the log-log plot of the PDF. On the log-log plot the PDF has zero slope at $x=\mu$, and a slope of $-\frac{1+\kappa}{\kappa}$ for $x\rightarrow\infty$. At $x=\sigma$ the log-log slope is -1 for all values of $\kappa$ and $\alpha$. Also, at $x=-\frac{\sigma}{\kappa^{\alpha}}$ the slope is half the value at infinity, $-\frac{1+\kappa}{2\kappa}$ and the second derivative is a maximum.  These properties are reviewed further in  Appendix \ref{secA1}. 

Relating the exponents of the coupled exponential family and the $q$ distributions we recover the following relationships:
\begin{equation}\label{q-coupling}
q=1+\frac{\alpha\kappa}{1+\kappa};\qquad \kappa=\frac{q-1}{\alpha-q+1}.
\end{equation}
From the term multiplying the variable $x$, the relationship between scale parameters $\beta$ and $\sigma$ is:
\begin{equation}
    \beta=\frac{1+\kappa}{\alpha\sigma^\alpha};\qquad \sigma=\left(\beta(\alpha+1-q)\right)^{-\frac{1}{\alpha}}.
\end{equation}

There are three domains of the coupled distributions,  heavy-tailed for $\kappa>0$ $(1<q<1+\alpha$) ; exponential for $\kappa=0$ $(q=1)$; and compact-support for $-1<\kappa<0$ $(0<q<1)$.

\subsection{Nonlinear sources of heavy-tailed distributions}
\begin{lemma}
Let a stochastic process $X_t$ be defined as
\begin{equation*}
    dX_t=f(X_t)dt+AdW_t^{(a)}+g(X_t)MdW_t^{(m)}
\end{equation*}
where $dW_t^{(a)}$ and $dW^{(m)}_t$ are independent Wiener processes which define the additive (a) and multiplicative (m) noise, and A and M are the amplitudes of each noise source. And let $f(x)=-\tau g(x)g'(x)=-V'(x)$, where $V(x)$ is a potential function specifying the influence of the multiplicative noise on the diffusion $D(x)$. Then the probability density $p_X(x,t)$ for this system has a limit solution of $p_X(x)=\lim_{t\to\infty}p_X(x,t)\propto \exp^{-\frac{1+\kappa}{2}}_\kappa \left(\frac{g^2(x)}{\sigma^2}\right)$, with $\sigma^2=\frac{A}{\tau}$ and $\kappa =\frac{D'(x)}{V'(x)}=\frac{M}{\tau}$.
\end{lemma}
\begin{proof}
    The probability density $p_X (x, t)$ is the solution to the Fokker–Planck equation \cite{anteneodoMultiplicativeNoiseNonextensive2003}
\begin{equation} \label{pd}
    \partial p_X (x, t) \partial t = -\frac{\partial}{\partial x} [J(x)p_X (x, t)] + \frac{\partial ^2}{\partial x^2} [D(x)p_X (x, t)] 
\end{equation}where $J(x)$ is the drift and D(x) is the diffusion:
\begin{align*}
    J(x)\equiv f(x)+Mg(x)g'(x)\\
    D(x)\equiv A+\frac{M}{2}[g(x)]^2
.
\end{align*}

Since $g(x)$ is related to the potential function by $V'(x)=\tau g(x)g'(x)\rightarrow V(x)=\frac{\tau}{2}[g(x)]^2$, the diffusion rate and the potential function are related by the expression
\begin{equation}
    D(x)=A+\frac{M}{\tau}V(x).
\end{equation}\label{pd_1}
The magnitude of the nonlinear source is $D'(x)/V'(x)=\frac{M}{\tau}$ of the coupled-exponential family \cite{naudtsGeneralisedExponentialFamilies2008} ,and as shown in \cite{anteneodoMultiplicativeNoiseNonextensive2003}.The stationary solution $p_X(x)$ for \ref{pd} is a member
\begin{align}
    p_X(x)=\lim_{t\to\infty}p_X(x,t)&\propto \left(1+\frac{M}{A}g^2(x)\right)^{-\frac{1+M/\tau}{2M/\tau}}\\
    &\propto\exp_\kappa^{-\frac{1+\kappa}{2}}\left[\frac{g^2(x)}{\sigma^2}\right]
\end{align}
Equating the exponents determines the coupling to be $\kappa=\frac{M}{\tau}$ and separating the coupling from the scale determines $\sigma^2=\frac{A}{\tau}$, completing the proof.
\end{proof}

In section \ref{subsubsec2} we compare estimation methods for data from the Coherent Noise Model (CNM) and the Standard Map Model (SMM). Both of these chaotic processes have additive and multiplicative noise components that result in coupled exponential and coupled Gaussian limit distributions, respectively. The relationship between the process physics and the scale and shape of the limit distribution is an open area of research. For instance, Zubillaga, et. al \cite{zubillagaThreestateOpinionDynamics2025} showed that simulations of financial markets have a nontrivial relationship between the inputs and the output shape and scale of the limit distribution.

\subsection{Estimation methods for heavy-tailed distributions}

The classical estimators for estimating the tail index in heavy-tailed distributions are widely used. The Hill's estimator \cite{hill1975tail} is one of the most common estimators for the tail index of heavy-tailed distributions. It is a type of maximum likelihood estimator that relies on the upper order statistics of the sample. A key limitation of Hill’s estimator lies in its dependence on the threshold parameter $k$, which determines the number of top order statistics used in the estimation. Selecting an appropriate value for $k$ is not straightforward, and the estimate is often very sensitive to this choice. This has motivated the development of alternative tail index estimators applicable to a wider class of heavy-tailed distributions. Dekkers, Einmahl, and de Haan \cite{dekkers1989moment} proposed a generalization known as the moment estimator, which extends the estimation to the broader case where  $\xi\in\mathbb{R}$. This modification incorporates higher-order moments to improve the estimator’s properties, particularly in finite samples. The moment estimator retains the key asymptotic properties of consistency and asymptotic normality, while reducing sensitivity to the threshold parameter $k$ compared to the Hill estimator. The simplest estimator for $\xi$ is the estimator of Pickands \cite{pickands1975statistical}. This estimator does not rely on logarithmic spacing of order statistics. Like Hill's method, Pickands' estimator is based on the upper order statistics of the sample and depends on a threshold parameter k. However, it avoids some of the instability of the Hill estimator in small samples by using ratios of differences between order statistics

Utilizing the NSM framework, Shalizi \cite{shaliziMaximumLikelihoodEstimation2007} established a maximum likelihood estimator for the q-exponential distribution, which is equivalent to the Generalized Pareto Distribution. In addition, Çadırık \cite{cadirci2025nonparametric} proposed a non-parametric goodness-of-fit testing framework for $q$-exponential distributions based on Tsallis entropy. This method assesses whether the underlying data distribution fit with a $q$-exponential model. This makes it a useful complement to estimation approaches, particularly in validating model assumptions before applying inference techniques.

\section{Estimation using Independent Approximates}\label{sec3}
In this paper, we analyze the performance of estimation using Independent Approximates \cite{nelsonIndependentApproximatesEnable2022}, in which independent samples from a random variable $X\sim f(x)$ are split into n-dimensions for subsampling of n-tuplets that are approximately equal. Our theoretical analysis assumes samples that are restricted to the equality condition $x_1 = x_2 = \cdots = x_n$ from the joint distribution of $n~i.i.d.$ copies of $X$. These "Independent-Equals" follow the conditional distribution:
\[
X^{(n)} \sim f^{(n)}(x) = \frac{f(x)^n}{\int_{x\in X} f(x)^n dx} = f(x_1, \dots, x_n \mid x_1 = \cdots = x_n)
\]
Our experimental approach entails selecting samples from a tolerance neighborhood $\epsilon$ around the independent equals conditional distribution, which we term Independent Approximates (IAs), denoted as $X^{(n)} \sim \hat{f}^{(n)}(x)$.
The idea behind selecting these Independent Approximate samples is to obtain a subset of the data that has a lower shape parameter and thus has moments that can be estimated, while still retaining a functional relationship with the original distribution's parameters.

In the case of the coupled distribution family, we specifically focus on the IA method when the shape parameter $\alpha$ is 1 or 2. We create estimators for the nth power of a density by taking $N$ independent and identically distributed samples and dividing them into subsets of length $n$. The subsets that have absolute values approximately equal are then selected. The median of these selected subsets gives us a set of Independent Approximates (IA) of size $N^{(n)}\equiv \lfloor(N/n)\rfloor$.

\begin{definition}[Power-moment]
The $mth$ moment of the $nth$ power of density $f_X (x)$ function defined as
$$\mu^{(n)}_m=\dfrac{\int_{x\in X}x^mf^n_X(x)dx}{\int_{x\in X}f^n_X(x)dx}=E\big[\big(X^{(n)}\big)^m].$$
\end{definition}

To illustrate the concept of the $m$th moment of the $n$th power density, let's consider the zeroth moment.In this scenario, we can define $f^{(n)}(x)$ as 
$$f^{(n)}=\frac{f^n(x)}{\int_{x\in X}f^n(x)}$$
The above expression represents the density of the conditional distribution along the diagonal of $n$ equal values when the $m$th moment is zero.

The concept of power-density moments allows for a mapping between estimates that can be obtained from the reduced shape of a distribution and those of the original distribution. If a distribution is raised to the power of $(m + 1)$ and renormalized, then the $m$th moment exists and is finite for all shapes $(\kappa \geq 0)$, \cite{nelsonIndependentApproximatesEnable2022}. Table \ref{GPD} provides the functional relationship between moments and the nth power of the generalized Pareto distribution, considering the location, scale, and shape parameters.

In the subsequent sections, we delve into the application and analysis of the IA method for the coupled distribution family, with a specific focus on the cases where $\alpha$ is 1 or 2.

\subsection{Selecting independent approximate subsamples (IAs)}
Our approach builds upon Nelson's IA method \cite{nelsonIndependentApproximatesEnable2022}, introducing a more general class of estimators and applying the least absolute deviation (LAD) to carefully select the optimal subsample. The algorithm is detailed below, providing a comprehensive understanding of our approach.
Use of the LAD enhances the robustness and applicability of the IA method.
Let $X_1, X_2, \dots, X_N$ be independent, identically distributed samples from a one-dimensional random variable with a one dimensional distribution $X_i \sim F(x)$. We randomly partition the samples into $n-tuplet$ groups, so there are $I=\lfloor N/n\rfloor$ groups denoted by $(X_{i_1}, X_{i_2},\dots X_{i_n}),(X_{i_{n+1}}, X_{i_{n+2}},\dots X_{i_{2n}}),\dots (X_{i_{(n*I)-(n-1)}}, X_{i_{(n*I)-(n-2)}},\dots X_{i_N})$, where $i_1, i_2,\dots,i_N$ is a random permutation of the integers $1, 2, \dots, N$. Let $\mathcal{D}$ be the set of all absolute differences calculated for each group as $d_n = \max(X_{i_{j}}^{(n)}) - min(X_{i_{j+1}}^{(n)})$. Since the samples $X_1, X_2, \dots, X_N$ are independently and identically distributed, the elements in $\mathcal{D}$ will be randomly distributed. Now, we sort the elements in $\mathcal{D}$ in ascending order, denoted by $\mathcal{D}_{\text{sorted}}$, such that $d_1 \leq d_2 \leq \dots \leq d_N$.Thus the Independent Approximate is
\[IAs=:\{d_i\in \mathcal{D}_{\text{sorted}}|d_i\leq\varepsilon\}\]
Then, we select the median $M_n$ of each $I$-tuple as the subsample,i.e., $M_n=\text{Median}(X_{i_1}, X_{i_2},\dots,X_{i_n})$ for $(X_{i_1}, X_{i_2},\dots,X_{i_n})\in IAs$. In this study, the threshold $\epsilon$ is not predefined but is instead determined dynamically by the algorithm during the data processing. The algorithm starts by considering pairs with the smallest absolute differences and gradually increases the threshold until it finds the appropriate limit that yields the best result. The selection of the optimal threshold relies on minimizing the least absolute deviation. This ordering of the absolute deviation is similar to the order statistics analysis used in the Hill estimator for the shape.

In order to estimate the parameters of the heavy tail distribution, we consider $X_1, X_2,\cdots, X_N$ independent identically distribution $(iid)$ drawn from a coupled distribution, with $\alpha$ either 1 or 2, where $N$ denotes the sample size. 
We use 25 permutations to select the independent approximate samples (IAs). Under the assumption that $\mu=0$, the (IAs) and the optimal number of sub-samples are selected as: 

\begin{enumerate}
\item If estimating scale:
\begin{enumerate}
\item  Partition the samples randomly into pairs and select equal pairs using $d_k=\{(X^{(2)}_{i_{2I-1}}-X^{(2)}_{i_{2I}}):~I=1,2,\dots,\lfloor N/2\rfloor, \lvert X^{(2)}_{i_{2I-1}}-X^{(2)}_{i_{2I}}\rvert$\}, sort the pairs by their distance.\label{a}
\item Find the median of these pairs $M_I$ to form a vector of independent approximation (IA) $M_I=Median\{X_{i_{2I-1}}, X_{i_{2I}}\}~~\text{for}~(X_{i_{2I-1}}, X_{i_{2I}}\in d_I)$.\label{b}

\item To determine the optimal number of sub-samples for  estimation, follow these steps:\label{c}

\begin{enumerate}
\item Choose a sample size from \ref{b}.In our experiments we started with either 10 or 20 and increased by one each cycle.\label{i}
        \item Estimate the $\sigma$ using the first table where $\mu=0$. 
        \item Check if the estimator has the smallest standard deviation, if yes, choose it and stop. If not, go back to Step \ref{a} and choose another sample size.\label{ii}
        \item Repeat steps \ref{i}-\ref{ii} until the estimator with the smallest standard deviation is found.
\end{enumerate}
\end{enumerate}
\item In case estimating the $\kappa$: 
\begin{enumerate}
\item We have expanded the procedure for the Independent Approximate Subsamples (IAs) from pairs to triplets. The samples are randomly partitioned into $\lfloor N/3\rfloor$ triplets, represented as $(X_{i_1}, X_{i_2}, X_{i_3}), (X_{i_4}, X_{i_5}, X_{i_6}), \dots, (X_{i_{N-2}}, X_{i_{N-1}}, X_{i_N})$, where $i_1, i_2, \dots, i_N$ is a random permutation of the integers $1, 2, \dots, N$. For each triplet, we calculate the absolute differences $d_I = \max(|X_{i_{3I-2}} - X_{i_{3I-1}}|, |X_{i_{3I-1}} - X_{i_{3I}}|)$ and retain only those triplets whose maximum difference is less than or equal to the tolerance $\epsilon$.The optimal number of sub-samples can be found using the same step \ref{c}. 
\end{enumerate}
\end{enumerate}

The finite $n-1$ moment occurs when the density function $f$ is raised to the power of $n$, ensuring that the bias remains finite. In the subsequent analysis, we assume that the subsamples to be independent and identical. To simplify notation, we assume $\mu$ as a known constant. Therefore, in proving the asymptotic results for $\sigma$ and $\kappa$, we will utilize the first and second moments, respectively. Let $\mu^{(n)}_m$ represent the general $m$th moment for a $n$th power density function. The expressions for the general power moments of the Generalized Pareto Distribution (GPD) are as follows:

\begin{align}
    \mu^{(n)}_m=\left(\dfrac{\kappa}{\sigma}\right)^{-m}\, \dfrac{m!\left(-2-m+n+\dfrac{n}{\kappa}\right)!}{\left(-2+n+\dfrac{n}{\kappa}\right)!}~~\text{for}~\kappa<\dfrac{n}{2+m-n}.\label{general_power}
\end{align}
Finally, we define $\hat{\sigma}$ as the estimate of the scale parameter and $\hat{\kappa}$ as the estimate of the shape parameter, given by $\hat{\sigma}=2\hat{\mu}^{(2)}$ and $\hat{\kappa}=\frac{8(\mu_1^2)^{(2)}}{3\mu_1^{(3)}}-3$ for coupled exponential and $\hat{\sigma}=\sqrt{3\mu^{(3)}_2}$ for coupled Gaussian. $I^2$ and $I^3$ represent pairs and triplets of independent approximations, respectively. These moment estimates are based on the values presented in \cite[Table 1]{nelsonIndependentApproximatesEnable2022} and \cite[Table 3]{nelsonIndependentApproximatesEnable2022}. For convenience, we list these tables here, Tables \ref{GPD}, \ref{t_dis}.

In this paper we focus on the estimation of the scale and shape assuming the location is known.  For the coupled Gaussian estimating three parameters is straightforward since $\mu_1^{(2)}$ and $\mu_2^{(3)}$ are independent of each other.  For the coupled exponential distribution, further investigation is required to determine a set of three moments that are sufficiently independent.

\begin{table}[!ht]
\caption{The $m$th moments for the $(m+1)$-power-density of the one-sided GPD.}\label{GPD}
\begin{tabular}{l|ll}
\hline
\multirow{2}{*}{\begin{tabular}[c]{@{}l@{}}Moment,\\ Centered\end{tabular}} & \multicolumn{2}{c}{One side Pareto type II} \\ \cmidrule{2-3}
 & \multicolumn{1}{l|}{Non centered} & Centered\\ \hline
$\mu_0, x-\mu$ & \multicolumn{1}{c|}{-} & (Geometric mean or Log-Average)\\ \hline
$\mu_1^{(2)}, x-\mu$ & \multicolumn{1}{c|}{$\mu+\dfrac{\sigma}{2}$} & \multicolumn{1}{c}{$\dfrac{\sigma}{2}$} \\ \hline
$\mu_2^{(3)}, x-\mu$ & \multicolumn{1}{c|}{$\mu^2+\dfrac{2\mu\sigma}{3+\kappa}+\dfrac{2\sigma^2}{3(3+\kappa)}$} & \multicolumn{1}{c}{$\dfrac{2\sigma^2}{3(3+\kappa)}$} \\ \hline
$\mu_3^{(4)}, x-\mu$ & \multicolumn{1}{c|}{$\mu^3+\dfrac{3\mu^2\sigma}{(4+2\kappa)}+\dfrac{12\mu\sigma^2+3\sigma^3}{2(4+\kappa)(4+2\kappa)}$} & \multicolumn{1}{c}{$\dfrac{3\sigma^3}{2(4+\kappa)(4+2\kappa)}$} \\ \hline
$\mu_4^{(5)}, x-\mu$ & \multicolumn{1}{c|}{$\mu^4+\dfrac{4\mu^3\sigma}{5+3\kappa}+\dfrac{12\mu^2\sigma^2}{(5+2\kappa)(5+3\kappa)}+\dfrac{24\sigma^4}{5(5+\kappa)(5+2\kappa)(5+3\kappa)}$} & \multicolumn{1}{c}{$\dfrac{120\mu\sigma^3+24\sigma^4}{5(5+\kappa)(5+2\kappa)(5+3\kappa)}$} \\ \hline
$\mu_{(m)}^{(m+1)}, x-\mu$ & \multicolumn{1}{c|}{$\sum_{i=0}^{m}\dfrac{m!\mu^{m-i}\sigma^i}{(m-1)!}\dfrac{\dfrac{1+m+(m-i-1)\kappa}{\kappa}!}{(\kappa^i)\dfrac{1+m+(m-1)\kappa}{\kappa}!}$} & \multicolumn{1}{c}{$\kappa^{-m}\dfrac{m!\dfrac{1+m-\kappa}{\kappa}!}{\dfrac{1+m+m\kappa-\kappa}{\kappa}!}\sigma^{-m}$} \\ \hline
\end{tabular}
\end{table}

\begin{table}[!ht]
\caption{The $nth$ moments for the $(n+1)$-power-density the Student's t-distribution.}\label{t_dis}
\begin{tabular}{l|ll}
\hline
\multirow{2}{*}{\begin{tabular}[c]{@{}l@{}}Moment,\\ Centered\end{tabular}} & \multicolumn{2}{c}{Student’s t, n+1} \\ \cmidrule{2-3} 
 & \multicolumn{1}{l|}{Non centered} & Centered\\ \hline
 $\mu_0, x-\mu$ & \multicolumn{1}{c|}{-} & (Geometric mean or Log-Average)\\ \hline
$\mu_1^{(2)}$ & \multicolumn{1}{c|}{$\mu$} &\multicolumn{1}{c}{$-$} \\ \hline
$\mu_2^{(3)}, x-\mu$ & \multicolumn{1}{c|}{$\mu^2+\dfrac{\sigma^2}{3}$} & \multicolumn{1}{c}{$\dfrac{\sigma^2}{3}$} \\ \hline
$\mu_3^{(4)}, x-\mu$ & \multicolumn{1}{c|}{$\mu^3+\dfrac{3\mu\sigma^2}{(4+\kappa)}$} & \multicolumn{1}{c}{$0$} \\ \hline
$\mu_4^{(5)}, x-\mu$ & \multicolumn{1}{c|}{$\mu^4+\dfrac{6\mu^2\sigma^2}{5+2\kappa}+\dfrac{3\sigma^4}{5(5+2\kappa)}$} & \multicolumn{1}{c}{$\dfrac{3\sigma^4}{25+10\kappa}$} \\ \hline
$\mu_{(n)}^{(n+1)}, x-\mu$ & \multicolumn{1}{c|}{Simplification Not Available} & \multicolumn{1}{c}{$(1+(-1^n))(\dfrac{\sigma}{\sqrt{\kappa}}^n)\dfrac{\dfrac{n-1}{2}!\dfrac{n+1-2\kappa}{2\kappa}!}{2\sqrt{\pi}\dfrac{n+1+(n-2)\kappa}{2\kappa}!}$} \\ \hline
\end{tabular}
\end{table}

\subsection{Theoretical part}\label{subsec2}
In this section, proves are provided regarding the bias and consistency of estimated parameters, namely $\sigma$ and $\kappa$ using IA.
\begin{lemma}
\begin{enumerate}
\item Suppose that the $X^{(2)}_i$, $ i=1,2,\cdots, N^{(2)}$ is independent-equal samples drawn from a 2-power coupled exponential distribution. The estimates of the scale parameter, $\hat{\sigma}$, is an unbiased estimator of $\sigma$.

\item Suppose that the $X^{(3)}_i$, $ i=1,2,\cdots, N^{(3)}$ samples from a centered 3-power coupled exponential density $f^{(3)}(x)$. The estimates of the scale $\hat{\kappa}$ is a biased estimator of $\kappa$.\label{unbias_kap}

\end{enumerate}
\end{lemma}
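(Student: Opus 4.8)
The plan is to view each estimator as a function of sample power-moments and compare its expectation against the target parameter using the closed forms in Table~\ref{GPD}. Write $\hat\mu_m^{(n)}=\frac{1}{N^{(n)}}\sum_{i=1}^{N^{(n)}}\big(X_i^{(n)}\big)^m$, so that each is an ordinary sample average of i.i.d.\ draws from the $n$-power density and hence satisfies $\mathbb{E}\big[\hat\mu_m^{(n)}\big]=\mu_m^{(n)}$ whenever the latter is finite. Part~1 is then immediate: $\hat\sigma=2\hat\mu_1^{(2)}$ is linear in a single sample average, the centred first moment of the $2$-power density is $\mu_1^{(2)}=\sigma/2$, and this moment is finite for $\kappa<2$, so linearity of expectation gives $\mathbb{E}[\hat\sigma]=2\mu_1^{(2)}=\sigma$ and $\hat\sigma$ is unbiased.

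For Part~2 I would write the shape estimator as $\hat\kappa=g(A,B)$ with $A=\hat\mu_1^{(2)}$, $B=\hat\mu_2^{(3)}$ and $g(a,b)=\tfrac{8a^2}{3b}-3$. Substituting the population values $a=\mu_1^{(2)}=\sigma/2$ and $b=\mu_2^{(3)}=\tfrac{2\sigma^2}{3(3+\kappa)}$ from Table~\ref{GPD} gives $g\big(\mathbb{E}[A],\mathbb{E}[B]\big)=\kappa$, so the estimator is consistent and any bias must originate in the nonlinearity of $g$. Because $g$ is strictly convex in each argument on $b>0$, with $g_{aa}=\tfrac{16}{3b}>0$ and $g_{bb}=\tfrac{16a^2}{3b^3}>0$, a second-order (delta-method) expansion of $\mathbb{E}[g(A,B)]$ about $\big(\mathbb{E}[A],\mathbb{E}[B]\big)$ yields
\[
\mathbb{E}[\hat\kappa]-\kappa=\tfrac12\Big(g_{aa}\,\operatorname{Var}(A)+2g_{ab}\,\operatorname{Cov}(A,B)+g_{bb}\,\operatorname{Var}(B)\Big)+o\!\left(N^{-1}\right).
\]
Since $A$ and $B$ are formed from disjoint pair- and triplet-subsamples they are independent, the covariance term vanishes, and the surviving diagonal terms are strictly positive; hence $\mathbb{E}[\hat\kappa]-\kappa>0$ for every finite sample and $\hat\kappa$ is (positively) biased. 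If an exact rather than asymptotic statement is wanted, the same conclusion follows from Jensen's inequality applied separately in $A$ (convexity of $a\mapsto a^2$) and in $B$ (convexity of $b\mapsto 1/b$), combined multiplicatively under independence and using $\operatorname{Var}(A),\operatorname{Var}(B)>0$.

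The expected obstacle is controlling the moments that enter these bounds. The square $A^2$ and the reciprocal $1/B$ require $\operatorname{Var}(A)$ and $\operatorname{Var}(B)$ to be finite, and by the finiteness condition $\kappa<\frac{n}{2+m-n}$ in~\eqref{general_power} (with $n=2,\,m=2$ for $A$ and $n=3,\,m=4$ for $B$) this confines the argument to $0<\kappa<1$; outside this range the delta-method step fails and the bias need not even be finite. The genuinely delicate point is the Jensen route: $X^{(3)}$ can sit arbitrarily close to the centred origin, where $f^{(3)}$ is bounded away from zero, so $\mathbb{E}[1/B]$ is not automatically finite and must be controlled (for instance by bounding the small-$B$ tail for sufficiently large $N^{(3)}$) before the exact inequality can be invoked. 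I therefore expect the safest route to be the leading $O(N^{-1})$ term above, whose sign is unambiguous and which does not require $\mathbb{E}[1/B]<\infty$; a minor bookkeeping point is to reconcile the stated hypothesis, which names only the $3$-power samples, with the appearance of the $2$-power moment $\hat\mu_1^{(2)}$ inside $\hat\kappa$.
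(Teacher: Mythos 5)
Your Part~1 is the paper's own argument: linearity of expectation applied to the sample mean of the pair-samples together with the entry $\mu_1^{(2)}=\sigma/2$ of Table~\ref{GPD}; nothing to add there. For Part~2 you take a genuinely different route. The paper produces an \emph{explicit finite-sample bias}: invoking independence of the pair and triplet statistics and unbiasedness of $\hat\sigma$, it replaces $\hat\sigma^2$ by $\sigma^2$, evaluates the expectation of the denominator $\tfrac{1}{I^{(3)}}\sum_i\bigl(X_i^{(3)}-\hat\mu\bigr)^2$ as $\tfrac{2\sigma^2}{3(3+\kappa)}\bigl(1-\tfrac{1}{I^{(2)}}\bigr)$ (the estimated-location correction), and then inverts that expectation to arrive at $E[\hat\kappa]-\kappa=\tfrac{3+\kappa}{I^{(2)}-1}$. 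You instead establish only that the bias is nonzero and positive, via a second-order delta expansion of $g(a,b)=\tfrac{8a^2}{3b}-3$, with Jensen's inequality as the exact backup. The comparison cuts both ways. The paper's computation yields the quantitative $O(1/I^{(2)})$ bias that is reused verbatim in the next lemma (its display \eqref{asy_unbi}) to prove asymptotic unbiasedness; but it rests on two unjustified exchanges, $E[\hat\sigma^{2}]=\sigma^{2}$ and $E[1/Y]=1/E[Y]$, both of which strictly \emph{underestimate} the bias --- precisely the Jensen effect your argument is built on. Conversely, your delta expansion alone cannot deliver the finite-sample claim (the $o(N^{-1})$ remainder is uncontrolled), so the exact statement must come from your Jensen variant; and that variant is in fact the cleanest fully rigorous proof of the lemma as stated: by independence $E[\hat\kappa]+3=\tfrac{8}{3}\,E[A^{2}]\,E[1/B]$, and if both factors are finite, strict convexity gives $E[\hat\kappa]+3>\tfrac{8}{3}\,(E[A])^{2}/E[B]=3+\kappa$, while if either factor is infinite the estimator is trivially biased --- so the caveat you raise about $E[1/B]<\infty$ is not actually an obstacle. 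You also surface the moment-finiteness conditions ($\kappa<1$, from \eqref{general_power}) and the bookkeeping point that the estimator involves the pair moment $\hat\mu_1^{(2)}$ as well as the triplet moment, both of which the paper glosses over. What you lose relative to the paper is the explicit bias formula and hence the rate needed downstream; your expansion recovers it only modulo a uniform-integrability justification that neither you nor the paper supplies.
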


\begin{proof}
The proof follows the general outline of the proof of Lemma 4 \cite{nelsonIndependentApproximatesEnable2022}.
\begin{enumerate}
\item \begin{align*}
 E[2\hat{\mu}_1^{(2)}-\sigma]=&2E\left[\dfrac{1}{I^{(2)}}\sum_{i=1}^{I^{(2)}}X_i^{(2)}-\sigma\right]\\
 =&\dfrac{2}{I^{(2)}}\sum_{i=1}^{I^{(2)}}E(X_i^{(2)})-\sigma
\end{align*}
By applying $E(X_i)=\sigma/2$ from Table \ref{GPD}, the proof is completed.

\item Given the unbiased estimate \(\hat{\sigma}\), the independence of \(\hat{\mu}^{(2)}\) and \(\hat{\mu}^{(3)}\), the bias of the estimate \(\hat{\kappa}\) is given by
\begin{align*}
 \dfrac{2}{3}E\left[\dfrac{\hat{\sigma}^2}{\hat{\mu}_2^{3}}-3-\kappa\right]=&\dfrac{2\sigma^2}{3}E\left[\dfrac{1}{\dfrac{1}{I^3}\sum_{i=1}^{I^{(3)}}(X_i^{(3)}-\hat{\mu})^2}\right]-3-\kappa\\
 =&\dfrac{2\sigma^2}{3}\left[\dfrac{1}{\dfrac{2\sigma^2}{3(3+\kappa)}-\dfrac{2\sigma^2}{3I^{(2)}(3+\kappa)}}\right]-(3+\kappa)\\
 =&\dfrac{3+\kappa}{I^{(2)}-1}
\end{align*}
\end{enumerate}
\end{proof}

\begin{lemma}
\begin{enumerate}
\item 
Let $\hat{\sigma}=2\mu_1^{(2)}$ be the estimator of $\sigma$ based on the independent-equal samples $X^{(2)}_i, i=1,\dots, N^{(2)}$ with the (p.d.f) of GPD. Then $\hat{\sigma}\overset P\longrightarrow \sigma.$

\item 
 Let $\hat{\kappa}=\dfrac{2\hat{\sigma}^2}{3\hat{\mu}^{(3)}_2}-3$ be the estimator of $\kappa$ based on the independent-equal samples $X^{(3)}_i, i=1,\dots, I^{(3)}$ with the (p.d.f) of GPD. Then $\hat{\kappa}\overset P\longrightarrow \kappa.$
\end{enumerate}
\end{lemma}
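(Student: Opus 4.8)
The plan is to reduce both convergence statements to the weak law of large numbers (WLLN) applied to the appropriate power-moment averages, and then to propagate the resulting limits through the algebraic definitions of $\hat{\sigma}$ and $\hat{\kappa}$ using the continuous mapping theorem (Slutsky's theorem). For part 1, observe that $\hat{\sigma}=2\hat{\mu}_1^{(2)}$ with $\hat{\mu}_1^{(2)}=\frac{1}{I^{(2)}}\sum_{i=1}^{I^{(2)}}X_i^{(2)}$ is simply twice the sample mean of the independent-equal $2$-power samples. Since these are i.i.d.\ and, by Table~\ref{GPD}, have finite first power-moment $E[X^{(2)}]=\sigma/2$, Khinchin's WLLN gives $\hat{\mu}_1^{(2)}\overset{P}{\longrightarrow}\sigma/2$, hence $\hat{\sigma}=2\hat{\mu}_1^{(2)}\overset{P}{\longrightarrow}\sigma$. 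The only input needed here is finiteness of the first power-moment, which is precisely what the $2$-power construction guarantees.

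For part 2, I would first establish $\hat{\mu}_2^{(3)}\overset{P}{\longrightarrow}\frac{2\sigma^2}{3(3+\kappa)}$. Writing $\hat{\mu}_2^{(3)}=\frac{1}{I^{(3)}}\sum_{i=1}^{I^{(3)}}(X_i^{(3)}-\mu)^2$ with $\mu$ treated as a known constant, this is the sample average of the i.i.d.\ variables $(X_i^{(3)}-\mu)^2$, whose mean is the centered second power-moment $\frac{2\sigma^2}{3(3+\kappa)}$ from Table~\ref{GPD}; a second application of WLLN delivers the claimed limit. Combined with $\hat{\sigma}^2\overset{P}{\longrightarrow}\sigma^2$ (continuous mapping applied to part 1), the pair $(\hat{\sigma}^2,\hat{\mu}_2^{(3)})$ converges in probability to $(\sigma^2,\frac{2\sigma^2}{3(3+\kappa)})$. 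Applying the function $g(a,b)=\frac{2a}{3b}-3$, which is continuous at this limit point because the limiting denominator $\frac{2\sigma^2}{3(3+\kappa)}$ is strictly positive, yields $\hat{\kappa}=g(\hat{\sigma}^2,\hat{\mu}_2^{(3)})\overset{P}{\longrightarrow}\frac{2\sigma^2}{3}\cdot\frac{3(3+\kappa)}{2\sigma^2}-3=\kappa$.

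The hard part will not be the algebra but verifying the hypotheses of the WLLN in the heavy-tailed regime: the summands must have finite expectation for Khinchin's law to apply, and this is exactly where the power-density construction earns its keep. I would make explicit that raising the density to the $n$th power accelerates the tail decay enough that the first moment of $X^{(2)}$ and the second moment of $(X^{(3)}-\mu)$ are finite (per the moment-existence condition accompanying Eq.~\eqref{general_power}), and I would record the admissible range of $\kappa$ over which each statement holds. A secondary point requiring care is the ratio appearing in $\hat{\kappa}$: convergence in probability is not preserved under division in general, so I would justify that step through the continuous mapping theorem, checking that the limiting denominator is bounded away from zero so that $g$ is continuous on a neighborhood of the limit. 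Finally, if the location is estimated rather than assumed known, an additional WLLN for $\hat{\mu}\overset{P}{\longrightarrow}\mu$ together with one more continuous-mapping step would handle the centering; and if one wished to strengthen the conclusion to mean-square convergence, finiteness of the next-higher power-moment would be needed, which follows from the same tail-acceleration mechanism but narrows the permissible range of $\kappa$.
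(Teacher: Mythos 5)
Your proof is correct, but it follows a genuinely different route from the paper's. The paper proves both parts through the criterion stated as Theorem~\ref{t1}: asymptotic unbiasedness plus vanishing variance implies consistency (essentially Chebyshev's inequality). For part 1 it computes $\operatorname{Var}(\hat{\sigma})=O\bigl(1/I^{(2)}\bigr)$ directly; for part 2 it first invokes the bias formula $\frac{3+\kappa}{I^{(2)}-1}\to 0$ from the preceding lemma and then attempts a direct variance computation for the ratio $\hat{\kappa}$, which requires the fourth power-moment of the $3$-power density via \eqref{general_power}. You instead use Khinchin's weak law of large numbers plus the continuous mapping theorem. The trade-off is as follows. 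The paper's variance route, where it applies, yields a convergence rate and in fact $L^2$ convergence, which is stronger; but it silently requires finite \emph{second} moments of the summands, i.e.\ $\operatorname{Var}(X^{(2)})<\infty$ (which by the existence condition in \eqref{general_power} restricts $\kappa<2$) and $\operatorname{Var}\bigl((X^{(3)})^2\bigr)<\infty$ (restricting $\kappa$ further, a condition the paper does not flag), and the variance manipulation for the quotient $\hat{\sigma}^2/\hat{\mu}_2^{(3)}$ is heuristic, since the variance of a ratio does not decompose the way the paper's display suggests. Your WLLN argument needs only finite \emph{first} moments of the summands ($E[X^{(2)}]=\sigma/2$ and $E[(X^{(3)}-\mu)^2]=\frac{2\sigma^2}{3(3+\kappa)}$ from Table~\ref{GPD}), which hold throughout the heavy-tailed range, so your conclusion covers all $\kappa>0$; and your use of the continuous mapping theorem, with the observation that the limiting denominator is strictly positive, handles the ratio in $\hat{\kappa}$ rigorously where the paper's treatment is loose. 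In short, your argument is weaker in conclusion (convergence in probability only, no rate) but more general and more rigorous in its handling of part 2.
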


\begin{proof}
\begin{enumerate}
\item 
Let $\hat{\theta}_n=\hat{\theta}_n(X_1,\dots,X_n)$ be the estimator of $\theta$ based on the random
sample $X_1,\dots,X_n$ with $p.d.f.$ $f (x, \theta)$.
The proof is based on the following theorem
\begin{theorem}\label{t1}
    An asymptotically unbiased estimator $\hat{\theta}_n$ for $\theta$ is a consistent estimator of $\theta$ if $\lim_{n\longrightarrow\infty}Var(\hat{\theta}_n)=0$ as $n\longrightarrow\infty$
\end{theorem}

According to Theorem \ref{t1}, we have
\begin{align*}
   \lim\limits_{I^{(2)}\to\infty} \operatorname{Var}\left(\hat{\sigma}_I^{(2)}\right)=& \lim\limits_{I^{(2)}\to\infty}4 \operatorname{Var}\left(\dfrac{1}{I^{(2)}}\sum_{i=1}^{I^{(2)}} X_i^{(2)}\right)\\
   =&\lim\limits_{I^{(2)}\to\infty}\dfrac{4}{I^{(2)}}\dfrac{2\sigma^2}{3(3+\kappa)}\\ 
   =&\lim\limits_{I^{(2)}\to\infty}\dfrac{8\sigma^2}{3I^{(2)}(3+\kappa)}
\end{align*}
The term in the last line tends to zero when $I^{(2)}\longrightarrow\infty$, which means that the Variances $(\hat{\sigma})$ have zero limits. The sample Variances is consistent using Theorem \ref{t1}

\item 
 From \ref{unbias_kap} [Lemma 2.1], $\hat{\kappa}$ is a biased estimator for $\kappa$. However, because $1-\dfrac{1}{I^{(2)}}\longrightarrow 1$ as $I^{(2)}\longrightarrow\infty$, we have
\begin{equation}\label{asy_unbi}
    \lim\limits_{I\to\infty} E(\hat{\kappa}_I)=\lim\limits_{I\to\infty}\left(\dfrac{3+\kappa}{1-\dfrac{1}{I^{(2)}}}\right)-3=\kappa.
\end{equation}
The $\kappa$ estimator for the $\kappa$ parameter is thus asymptotically unbiased.\\
Using the variance criterion for consistency Theorem \ref{t1} and under the assumption that the $I^{(2)} > I^{(3)}$, we have
\begin{align*}
    \lim_{I\to\infty} \text{Var}(\hat{\kappa}_I) &= \lim_{I\to\infty}\left(16~\text{Var}(\hat{\mu}^{(2)})^2 \times \left[\frac{1}{\frac{9}{I^{(3)}{}^2}\sum_{i=1}^{I^{(3)}}\left(\text{Var}(x_i^{(3)})^2+\text{Var}(\hat{\mu}^2)\right)}\right]\right)\\
\end{align*}
The variance of the square of the samples from a 3-power coupled exponential density is given by {m = 4, n = 3} in \ref{general_power}
\begin{align*}
    \dfrac{\int_{-\infty}^{\infty}x^4f^3(x)dx}{\int_{-\infty}^{\infty}f^3(x)dx}=&\left(\dfrac{\kappa}{\sigma}\right)^{-4}\dfrac{4!\left(-2-4+3+\dfrac{3}{\kappa}\right)!}{\left(-2+3+\dfrac{3}{\kappa}\right)!}\\
    =& \dfrac{8\sigma^4}{(-3+\kappa)(3+\kappa)(-3+2\kappa)}
\end{align*}

The variance of the 2-power samples is given by {m = 2, n = 2} 
\[Var(\hat{\mu})=\dfrac{\sigma^2}{I^{(2)}(2-\kappa)}, ~~~\kappa<2\]
The variance of the square of the location estimate is the square of the variance of the location estimate. Thus,

\begin{align*}
   \lim_{I\to\infty} \text{Var}(\hat{\kappa}_I) &=  \lim_{I^{(2)}\to\infty}\left[\frac{2}{\frac{9(I^{(2)}(2-\kappa))^2}{I^{(3)}}\times\left(\frac{1}{(-3+\kappa)(3+\kappa)(-3+2\kappa)}+\frac{2}{(I^{(2)}(2-\kappa))^2}\right)}\right]\\
   &=0
\end{align*}
Thus from \ref{asy_unbi} and variance criterion, the variance of the 2-power samples from a GPD is consistent with $\kappa.$
\end{enumerate}
\end{proof}

\section{Analysis of Performance and Applications}\label{subsubsec2}
\subsection{Simulation results}

In this section, we present a comprehensive evaluation of the performance of the Independent Approximates (IA) algorithm by conducting a simulation study using samples drawn from a known distribution. Our main goal is to conduct a robust empirical comparison between our method and the conventional Maximum Likelihood (ML) approach for estimation of heavy-tailed distributions, particularly the coupled exponential and coupled Gaussian. In the study, samples from a coupled exponential distribution (generalized Pareto) are drawn using (\ref{CD-PDF}) with $\alpha$ set to 1. Samples from a coupled Gaussian (Student's t) are drawn using the generalized Box-Müeller method \cite{thistletonGeneralizedBoxMuller2007, nelsonCommentsGeneralizedBoxMuller2021}.
To assess the effectiveness and accuracy of our algorithm, we employ a range of evaluation criteria, including the Coefficient of Efficiency (CE), Average Deviation (AD), Cramer-von Mises (CvM), and Negative Log-Likelihood (NLL). These criteria serve as reliable indicators to gauge the performance of our approach across different scenarios.

For the IA estimate of the scale and shape parameters for one-sided GPD distributions, we compare two approaches. In both cases the IA-pairs mean $\mu^{(2)}_1$ is one of the statistics. This is combined with either the geometric mean of original samples $\mu_0$ or the IA-triplets second-moment $\mu^{(3)}_2$. We conduct parameter estimation for various values of the shape parameter $\kappa$, spanning the range from $0.25$ to $2$, and for the sample size of $N=10,000$. Performance for samples sizes of 100 and 1000 are provided in Appendix \ref{secA2}.

To ensure the accuracy of our estimation, we develop an optimal subsampling technique aimed at minimizing estimation errors. This technique involves selecting the optimal subsample by minimizing the least absolute deviation (LAD) of the estimate, either $\mu^{(2)}_1$ or $\mu^{(3)}_2$, for selecting pairs and triplets, respectively. To find the optimal configuration, we employ a search method that varies the number of subsamples until the optimal outcome is achieved. The LAD method for determining the optimal subsample of heavy-tailed distributions offers several notable advantages. Heavy-tailed distributions often exhibit outliers or extreme observations, which can substantially impact the accuracy of the estimation process. Unlike traditional mean-based estimators that are sensitive to outliers, the LAD method is known for its robustness and resilience to extreme values. The LAD method aims to minimize the sum of the absolute deviations between the observed data points and the estimated values. The LAD method, therefore, gives more weight to outliers and extreme observations in the data, as it directly considers their absolute distances from the estimated values. Moreover, the LAD method aligns naturally with the underlying assumption of the generalized Pareto distribution (GPD), which serves as a model for heavy-tailed data. The GPD is designed to capture the tail behaviour of distributions, precisely the region where the LAD method thrives.

Tables \ref{MSE_Gau} and \ref{Goodness-t} provide a comprehensive analysis of the empirical mean square errors (MSE) and the performance metrics for different estimation methods applied to coupled Gaussian distribution. Specifically, Table \ref{MSE_Gau} details the empirical MSE for the shape $\kappa$ and  scale $\sigma$ estimations, across various values of $\kappa$.

\begin{table}[!ht]
\caption{Empirical mean square errors (MSE) of parameter estimates for data generated from a coupled Gaussian distribution and for sample size $n = 10,000$. The scale is $\sigma=0.5$ and the shape $\kappa$ varies as indicated.}\label{MSE_Gau}
\begin{tabular}{l|llll}
\hline
\multicolumn{1}{c|}{\multirow{3}{*}{$\kappa$}} & \multicolumn{4}{c}{\begin{tabular}[c]{@{}c@{}}MSE\end{tabular}}                         \\ \cmidrule{2-5} 
\multicolumn{1}{c|}{}                          & \multicolumn{2}{c|}{IA\_GM}                                         & \multicolumn{2}{c}{ML}                         \\ \cmidrule{2-5} 
\multicolumn{1}{c|}{}                          & \multicolumn{1}{l|}{$\hat{\kappa}$}       & \multicolumn{1}{l|}{$\hat{\sigma}$}       & \multicolumn{1}{l|}{$\hat{\kappa}$}       & $\hat{\sigma}$       \\ \hline
0.25                                           & \multicolumn{1}{l|}{$0.080\pm 0.005$} & \multicolumn{1}{l|}{$0.013\pm 0.003$} & \multicolumn{1}{l|}{$0.008\pm 0.007$} & $0.006\pm 0.005$ \\ \hline
0.5                                            & \multicolumn{1}{l|}{$0.006\pm 0.003$} & \multicolumn{1}{l|}{$0.012\pm 0.003$} & \multicolumn{1}{l|}{$0.007\pm0.004$} & $0.007\pm0.006$ \\ \hline
1                                              & \multicolumn{1}{l|}{$0.034\pm0.009$} & \multicolumn{1}{l|}{$0.005\pm0.009$} & \multicolumn{1}{l|}{$0.01\pm0.01$} & $0.009\pm0.003$ \\ \hline
1.25                                           & \multicolumn{1}{l|}{$0.03\pm0.01$} & \multicolumn{1}{l|}{$0.001\pm0.003$} & \multicolumn{1}{l|}{$0.008\pm0.009$} & $0.009\pm0.004$ \\ \hline
2                                              & \multicolumn{1}{l|}{$0.06\pm0.02$} & \multicolumn{1}{l|}{$0.040\pm0.003$} & \multicolumn{1}{l|}{$0.01\pm0.02$} & $0.013\pm0.007$ \\ \hline
\end{tabular}
\end{table}

\begin{table}[h]
\caption{Goodness-of-Fit Metrics for the Coupled Gaussian Distribution under Various Methods and Shape Parameters $\kappa$ with a Fixed Scale Parameter ($\sigma = 0.5$).}
\label{Goodness-t}
\begin{tabular}{llllll}
\hline
\multicolumn{6}{c}{Coupled Gaussian} \\ \hline
\multicolumn{6}{c}{Average deviation (AD)    {$\sigma=0.5$}} \\ \hline
\multicolumn{1}{l|}{Method\textbackslash $\kappa$} & \multicolumn{1}{l|}{0.25} & \multicolumn{1}{l|}{0.5} & \multicolumn{1}{l|}{1} & \multicolumn{1}{l|}{1.25} & 2 \\ \hline
\multicolumn{1}{l|}{IA (Geometric mean)} & \multicolumn{1}{l|}{0.03} & \multicolumn{1}{l|}{0.063} & \multicolumn{1}{l|}{0.71} & \multicolumn{1}{l|}{4.3} & 2200 \\ \hline
\multicolumn{1}{l|}{ML} & \multicolumn{1}{l|}{0.034} & \multicolumn{1}{l|}{0.097} & \multicolumn{1}{l|}{17} & \multicolumn{1}{l|}{300} & 620,000 \\ \hline
\multicolumn{6}{c}{Cramer–von Mises (CvM)  $\sigma=0.5$} \\ \hline
\multicolumn{1}{l|}{Method\textbackslash $\kappa$} & \multicolumn{1}{l|}{0.25} & \multicolumn{1}{l|}{0.5} & \multicolumn{1}{l|}{1} & \multicolumn{1}{l|}{1.25} & 2 \\ \hline
\multicolumn{1}{l|}{IA (Geometric mean)} & \multicolumn{1}{l|}{0.76} & \multicolumn{1}{l|}{0.75} & \multicolumn{1}{l|}{0.74} & \multicolumn{1}{l|}{0.74} & 0.79 \\ \hline
\multicolumn{1}{l|}{ML} & \multicolumn{1}{l|}{1.1} & \multicolumn{1}{l|}{1.1} & \multicolumn{1}{l|}{1.0} & \multicolumn{1}{l|}{1.0} & 0.96 \\ \hline
\multicolumn{6}{c}{NLL} \\ \hline
\multicolumn{1}{l|}{Method\textbackslash $\kappa$} & \multicolumn{1}{l|}{0.25} & \multicolumn{1}{l|}{0.5} & \multicolumn{1}{l|}{1} & \multicolumn{1}{l|}{1.25} & 2 \\ \hline
\multicolumn{1}{l|}{IA (Geometric mean)} & \multicolumn{1}{l|}{16,000} & \multicolumn{1}{l|}{16,000} & \multicolumn{1}{l|}{19,000} & \multicolumn{1}{l|}{22,000} & 38,000 \\ \hline
\multicolumn{1}{l|}{ML} & \multicolumn{1}{l|}{17,000} & \multicolumn{1}{l|}{15,000} & \multicolumn{1}{l|}{19,000} & \multicolumn{1}{l|}{22,000} & 39,000 \\ \hline
\end{tabular}
\end{table}

Table \ref{Goodness-t} presents insights into the performance of estimation methods for the coupled Gaussian distribution using Average Deviation (AD), Cramer–von Mises (CvM), and and Negative Log-Likelihood (NLL) as criteria.
The IA\_GM method exhibits a distinct performance when compared to the quality metrics used. It appears to be more stable and capable of achieving a better fit with the data in general.  This is particularly noticeable in the AD method, where the IA\_GM estimator performs significantly better than the ML estimator, especially when $\kappa$ is high.

In the context of the coupled exponential distribution, Table \ref{MSE_GPD} provides the empirical MSE  for the shape $\kappa$ and  scale $\sigma$ parameters. Our analysis encompassed the IA\_GM, IA, and ML methods. The results indicate that the IA\_GM and IA methods exhibit superior performance compared to the ML method.

\begin{table}[!ht]
\caption{Empirical mean square errors (MSE) of parameter estimates for data generated from a coupled exponential distribution with a sample size $n = 10,000$. The scale $\sigma=0.5$ and the shape $\kappa$ varies as indicated.}\label{MSE_GPD}
\begin{tabular}{l|llllll}
\hline
\multicolumn{1}{c|}{\multirow{3}{*}{$\kappa$}} & \multicolumn{6}{c}{\begin{tabular}[c]{@{}c@{}}$(MSE\pm SD)$x$10^{-3}$\end{tabular}}                                  \\ \cmidrule{2-7} 
\multicolumn{1}{c|}{}                          & \multicolumn{2}{c|}{IA\_GM}                             & \multicolumn{2}{c|}{IA}                                 & \multicolumn{2}{l}{ML}             \\ \cmidrule{2-7} 
\multicolumn{1}{c|}{}                          & \multicolumn{1}{l|}{$\hat{\kappa}$} & \multicolumn{1}{l|}{$\hat{\sigma}$} & \multicolumn{1}{l|}{$\hat{\kappa}$} & \multicolumn{1}{l|}{$\hat{\sigma}$} & \multicolumn{1}{l|}{$\hat{\kappa}$} & $\hat{\sigma}$ \\ \hline
0.25& \multicolumn{1}{l|}{$0\pm1$} & \multicolumn{1}{l|}{$1.0\pm0.1$} & \multicolumn{1}{l|}{$6\pm6$} & \multicolumn{1}{l|}{$9\pm4$} & \multicolumn{1}{l|}{$22\pm5$} & $6\pm5$ \\ \hline
0.5& \multicolumn{1}{l|}{$0\pm5$} & \multicolumn{1}{l|}{$1.0\pm0.2$} & \multicolumn{1}{l|}{$0\pm4$} & \multicolumn{1}{l|}{$0\pm3$} & \multicolumn{1}{l|}{$20\pm7$} & $4\pm5$ \\ \hline
1& \multicolumn{1}{l|}{$0\pm10$}  & \multicolumn{1}{l|}{$1.0\pm0.2$}& \multicolumn{1}{l|}{$20\pm10$} & \multicolumn{1}{l|}{$32\pm3$} & \multicolumn{1}{l|}{$20\pm10$} & $4\pm5$ \\ \hline
1.25& \multicolumn{1}{l|}{$3\pm10$} & \multicolumn{1}{l|}{$1.0\pm0.2$} & \multicolumn{1}{l|}{$3\pm9$} & \multicolumn{1}{l|}{$15\pm4$} & \multicolumn{1}{l|}{$20\pm10$} & $3\pm3$ \\ \hline
2& \multicolumn{1}{l|}{$20\pm20$} & \multicolumn{1}{l|}{$5.0\pm0.2$} & \multicolumn{1}{l|}{$60\pm10$} & \multicolumn{1}{l|}{$39\pm6$} & \multicolumn{1}{l|}{$10\pm20$} & $13\pm2$ \\ \hline
\end{tabular}
\end{table}

\begin{table}[!ht]
\caption{Goodness-of-Fit Metrics for the Coupled Exponential Distribution under Various Methods and Shape Parameters $\kappa$ with a Fixed Scale Parameter ($\sigma = 0.5$)}
\label{Goodness-GP}
\begin{tabular}{llllll}
\hline
\multicolumn{6}{c}{Coupled Exponential} \\ \hline
\multicolumn{6}{c}{Average deviation (AD)    sigma=0.5} \\ \hline
\multicolumn{1}{l|}{Method\textbackslash kappa} & \multicolumn{1}{l|}{0.25} & \multicolumn{1}{l|}{0.5} & \multicolumn{1}{l|}{1} & \multicolumn{1}{l|}{1.25} & 2 \\ \hline
\multicolumn{1}{l|}{IA (Geometric mean)} & \multicolumn{1}{l|}{$0.14$x$10^{-3}$} & \multicolumn{1}{l|}{$0.53$x$10^{-3}$} & \multicolumn{1}{l|}{0.003} & \multicolumn{1}{l|}{0.17} & 320 \\ \hline
\multicolumn{1}{l|}{IA (Triplets)} & \multicolumn{1}{l|}{0.007} & \multicolumn{1}{l|}{0.002} & \multicolumn{1}{l|}{0.15} & \multicolumn{1}{l|}{0.14} & 650 \\ \hline
\multicolumn{1}{l|}{ML} & \multicolumn{1}{l|}{0.013} & \multicolumn{1}{l|}{0.027} & \multicolumn{1}{l|}{0.33} & \multicolumn{1}{l|}{1.57} & 180 \\ \hline
\multicolumn{6}{c}{Cramer–von Mises (CvM)  Sigma=0.5} \\ \hline
\multicolumn{1}{l|}{Method\textbackslash kappa} & \multicolumn{1}{l|}{0.25} & \multicolumn{1}{l|}{0.5} & \multicolumn{1}{l|}{1} & \multicolumn{1}{l|}{1.25} & 2 \\ \hline
\multicolumn{1}{l|}{IA (Geometric mean)} & \multicolumn{1}{l|}{$0.045$x$10^{-3}$} & \multicolumn{1}{l|}{$0.086$x$10^{-3}$} & \multicolumn{1}{l|}{0.0002} & \multicolumn{1}{l|}{0.001} & 0.007 \\ \hline
\multicolumn{1}{l|}{IA (Triplets)} & \multicolumn{1}{l|}{0.063} & \multicolumn{1}{l|}{0.0005} & \multicolumn{1}{l|}{0.55} & \multicolumn{1}{l|}{0.11} & 0.27 \\ \hline
\multicolumn{1}{l|}{ML} & \multicolumn{1}{l|}{0.013} & \multicolumn{1}{l|}{0.009} & \multicolumn{1}{l|}{0.01} & \multicolumn{1}{l|}{0.011} & 0.078 \\ \hline
\multicolumn{6}{c}{NLL} \\ \hline
\multicolumn{1}{l|}{Method\textbackslash kappa} & \multicolumn{1}{l|}{0.25} & \multicolumn{1}{l|}{0.5} & \multicolumn{1}{l|}{1} & \multicolumn{1}{l|}{1.25} & 2 \\ \hline
\multicolumn{1}{l|}{IA (Geometric mean)} & \multicolumn{1}{l|}{5,500} & \multicolumn{1}{l|}{8,000} & \multicolumn{1}{l|}{13,000} & \multicolumn{1}{l|}{15,000} & 23,000 \\ \hline
\multicolumn{1}{l|}{IA (Triplets)} & \multicolumn{1}{l|}{5,500} & \multicolumn{1}{l|}{8,000} & \multicolumn{1}{l|}{13,000} & \multicolumn{1}{l|}{16,000} & 23,000 \\ \hline
\multicolumn{1}{l|}{ML} & \multicolumn{1}{l|}{5,600} & \multicolumn{1}{l|}{8,100} & \multicolumn{1}{l|}{13,000} & \multicolumn{1}{l|}{16,000} & 23,000 \\ \hline
\end{tabular}
\end{table}

The performance of various estimation methods for the coupled exponential distribution is elucidated in Table \ref{Goodness-GP} using AD, CvM, and and NLL as criteria. Notably, in terms of methodological efficacy, exemplary results are observed. For instance, at $\kappa = 0.25$, the IA\_GM and IA methods exhibit notably smaller AD values, indicating their superior goodness of fit. This trend persists as $\kappa$ increases, with IA\_GM and IA consistently outperforming ML in the Cramer–von Mises (CvM) tests. These results highlight the robustness of IA\_GM and IA in capturing the nuances of the coupled exponential distribution. In contrast, the ML method tends to underperform across the range of $\kappa$ values, underscoring its limitations in accurately capturing the characteristics of the distribution.

\subsection{Analysis of Coherent Noise Model (CNM) Results}\label{sec4}
To evaluate the effectiveness of the IA estimator with samples from a coupled exponential distribution with unknown parameters, we've utilized data from a simulation of CNM \cite{celikogluAnalysisReturnDistributions2010}. 
This model is designed to provide a structured framework for the evaluation of a system's response to external stressors \cite{newmanAvalanchesScalingCoherent1996, sneppenCoherentNoiseScale1997}. The model involves a collection of $N$ agents, each distinguished by a unique threshold denoted as $x_i$. These thresholds serve as indicators of the agents' capacity to withstand external stress, represented by the variable $\eta$. Importantly, both the threshold values and external stress factors are drawn from probability distributions, namely $p_{threshold}(x)$ and $p_{stress}(\eta)$.
The stress is modeled as an exponential distribution $p_{stress}(\eta) = (1/\eta) \exp(-\eta/\sigma)$. The threshold is modeled as a uniform distribution between 0 to 1, $p_{threshold}(x)$:
\[
p_{threshold}(x) = 
\begin{cases} 
1 & \text{if } 0 \le x \le 1, \\
0 & \text{otherwise}
\end{cases}
.\]
We used $\sigma = 0.05$ (threshold of exponential cut-off) , $f=8000$ and a time series of $4\times10^8$ sampled in $N=10,000$ increments.

The dynamics of the model is very simple, yet effective (for the details of the model, see \cite{wilkeAftershocksCoherentnoiseModels1998, sarlisPredictabilityCoherentnoiseModel2012}, :
\begin{itemize}
    \item A random stress factor, $\eta$, is generated in accordance with the $p_{stress}(\eta)$ distribution. Agents with threshold values $(x_i)$ falling below $\eta$ are systematically replaced by new agents, each of whom is endowed with a threshold drawn from the $p_{threshold}(x)$ distribution.

    \item To maintain the continuity of avalanche generation, a fraction of the $N$ agents, denoted as $f$, undergo threshold updates. New threshold values for these selected agents are drawn once again from the $p_{threshold}(x)$ distribution.
    
    \item This process iterates, with the first step being executed in each subsequent time interval.
\end{itemize}

Tables \ref{est_CNM} and Figure \ref{analysis_CNM} present the parameter estimates and the performance evaluation of estimation methods, utilizing AD, CvM, and NLL as criteria for a Coherent Noise Model (CNM) based on a coupled exponential distribution. In all three criteria, we note that the IA\_GM estimator has better performance than ML and IA. Therefore, based on the provided results, IA\_GM appears to be the best method for estimating the coupled exponential distribution for the CNM.

\begin{table}[h]
    \centering
   \caption{Comparison of CNM Parameter Estimates using three methods: ML, IA\_GM, and IA}
\label{est_CNM}
    \begin{tabular}{lllll}
      \hline
      \multicolumn{5}{c}{\begin{tabular}[c]{@{}c@{}}Analyzing Standard Map set Results using \\ a coupled exponential\end{tabular}} \\ \hline
      \multicolumn{1}{l|}{Method} & \multicolumn{1}{l|}{$\hat{\kappa}$} & \multicolumn{1}{l|}{$\hat{\sigma}(10^{-3})$} & \multicolumn{1}{l|}{$\hat{q}$} & {$\hat{\beta}$}  \\ \hline
      \multicolumn{1}{l|}{$ML$} & \multicolumn{1}{l|}{$0.961\pm 0.004$
      } & \multicolumn{1}{l|}{$4.60\pm 0.001$
      } & \multicolumn{1}{l|}{1.49} & {4300} \\ \hline
      \multicolumn{1}{l|}{$IA\_GM$} & \multicolumn{1}{l|}{$0.92\pm 0.05$
      } & \multicolumn{1}{l|}{$4.9\pm0.02$
      } & \multicolumn{1}{l|}{1.5} & {3900} \\ \hline
      \multicolumn{1}{l|}{$IA$} & \multicolumn{1}{l|}{$0.97\pm0.01$
      } & \multicolumn{1}{c|}{$4.90\pm 0.06$
} & \multicolumn{1}{c|}{1.5} & {4000} \\ \hline
    \end{tabular}
\end{table}

\begin{figure}[!ht]
 \caption{(a) Evaluation Criteria for Goodness-of-Fit across Three Methods (ML, IA\_GM, and IA) in the Coupled Exponential Distribution: The evaluation employs various approaches: AD, CvM, and NLL for the CNM model.
 (b) Histogram of the Estimated PDF for the Coupled Exponential Distribution: The figure displays the original CNM model along with the estimations using the IA, GM, and ML methods.}
    \label{fig:combined}
    \centering
    \begin{subfigure}[b]{0.7\textwidth} 
        \centering
        \scalebox{1}{
        \begin{tabular}{lccc}
            \hline
            \multicolumn{4}{c}{\begin{tabular}[c]{@{}c@{}}Analyzing Coherent Noise Model (CNM) Results\\ using a coupled exponential \end{tabular}} \\ \hline
            \multicolumn{1}{l|}{Methods/ Criteria} & \multicolumn{1}{c|}{AD} & \multicolumn{1}{c|}{CvM} & NLL \\ \hline
            \multicolumn{1}{l|}{ML} & \multicolumn{1}{c|}{0.005} & \multicolumn{1}{c|}{0.70} & 60,000 \\ \hline
            \multicolumn{1}{l|}{$IA\_GM$} & \multicolumn{1}{c|}{0.001} & \multicolumn{1}{c|}{0.68} & 60,000 \\ \hline
            \multicolumn{1}{l|}{IA} & \multicolumn{1}{c|}{0.001} & \multicolumn{1}{c|}{1.1} & 60,000 \\ \hline
        \end{tabular}}
        \caption{}
        \label{analysis_CNM}
    \end{subfigure}
\end{figure}

\begin{figure}[!ht]\ContinuedFloat
    \centering
    \begin{subfigure}[b]{0.5\textwidth}
        \centering
        \includegraphics[width=8cm,height=6cm]{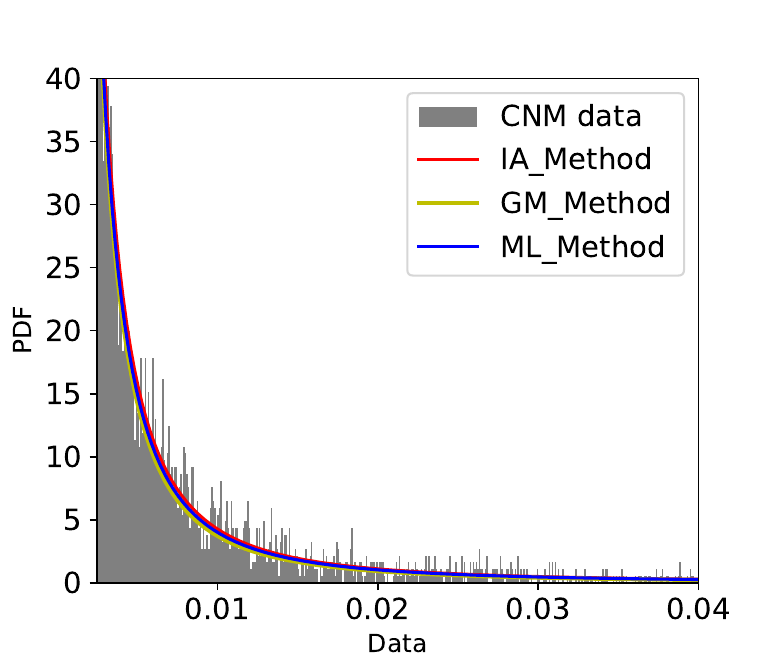}
        \caption{}
        \label{fig:CNM}
    \end{subfigure}
\end{figure}

\subsection{The standard map model}\label{sec6}

To assess the performance of the IA estimator with samples from a coupled Gaussian distribution with unknown parameters we utilized a simulation of the standard map. It is a well-known two-dimensional conservative nonlinear dynamical system described by an iterative function of two variables 
\begin{equation}
\begin{array}{l}
y_{i+1} =p_{i} - K \sin(x_i)\\
x_{i+1}=x_{i} + y_{i+1}
\end{array}
\label{gen-stan-map}
\end{equation}
where $x$ and $y$ are taken as modulo $2\pi$ \cite{zaslavskyHamiltonianChaosFractional2005, izraelevNearlyLinearMappings1980, chirikovUniversalInstabilityManydimensional1979}. 
It has already been numerically shown \cite{tirnakliStandardMapBoltzmannGibbs2016} that, for small $K$ values (for which the phase space is dominated by the stability islands), central limit behavior of the model can be well approximated by a $q$-Gaussian with $q\simeq 1.935$, defining the shape based on its relationship with the Tsallis parameter outlined in Section \ref{background}. For more details refer to \cite{nelsonIndependentApproximatesEnable2022}.

Figure \ref{analysis_map}  provides a comprehensive comparison of parameter estimates for $\kappa$ and $\sigma$ using the IA\_GM and ML methods, along with an evaluation of their performance based on the criteria AD, CvM, and NLL.  
The AD performance of IA\_GM excels indicating a superior fit. CvM values further support this trend, with IA\_GM presenting a lower CvM value, suggesting a better overall fit. Additionally, NLL values corroborate the superiority of IA\_GM, as it achieves a lower NLL value. Based on these criteria, the IA\_GM method stands out as a more effective choice compared to the ML method for the specified Standard Map Set with a coupled Gaussian distribution.

\begin{figure}[!ht]
\caption{(a) Comparison of Standard Map Parameter Estimates and Evaluation Criteria for Goodness-of-Fit across two Methods (ML, IA\_GM) in the Coupled Gaussian Distribution Using Various Approaches (AD, CvM, and NLL for the Map Model). (b) Histogram of the Estimated PDF for the Coupled Gaussian Distribution. The Figure displays the original map data along with the estimations using the Independent Approximates algorithm (IA) and the Maximum Likelihood (ML) method.}
\label{Map_set}
\centering
\begin{subfigure}[b]{1\textwidth} 
    \centering
    \scalebox{1}{
    \begin{tabular}[b]{llllllll}
        \hline
        \multicolumn{8}{c}{\begin{tabular}[c]{@{}c@{}}Analyzing Standard Map set Results using \\ a coupled Gaussian\end{tabular}} \\ \hline
        \multicolumn{1}{l|}{Method} & \multicolumn{1}{l|}{$\hat{\kappa}$} & \multicolumn{1}{l|}{$\hat{\sigma}$} & \multicolumn{1}{l|}{$\hat{q}$} & \multicolumn{1}{l|}{$\hat{\beta}$} & \multicolumn{1}{l|}{AD} & \multicolumn{1}{l|}{CvM} & NLL \\ \hline
        \multicolumn{1}{l|}{$IA\_GM$} & \multicolumn{1}{l|}{$0.91\pm0.05$} & \multicolumn{1}{l|}{$0.076\pm0.001$} &\multicolumn{1}{l|}{1.953} & \multicolumn{1}{l|}{159} & \multicolumn{1}{l|}{0.43} & \multicolumn{1}{l|}{11} & 9900 \\ \hline
        \multicolumn{1}{l|}{$ML$} & \multicolumn{1}{l|}{$0.900\pm0.007$} & \multicolumn{1}{l|}{$0.080\pm0.001$} & \multicolumn{1}{l|}{1.947} & \multicolumn{1}{l|}{146} &\multicolumn{1}{l|}{1.6} & \multicolumn{1}{l|}{12} & 10,000 \\ \hline
    \end{tabular}}
    \caption{}
    \label{analysis_map}
\end{subfigure}
\end{figure}

\begin{figure}[!ht]\ContinuedFloat
\centering
\begin{subfigure}[b]{0.6\textwidth}
    \centering
    \includegraphics[width=8cm,height=5.5cm]{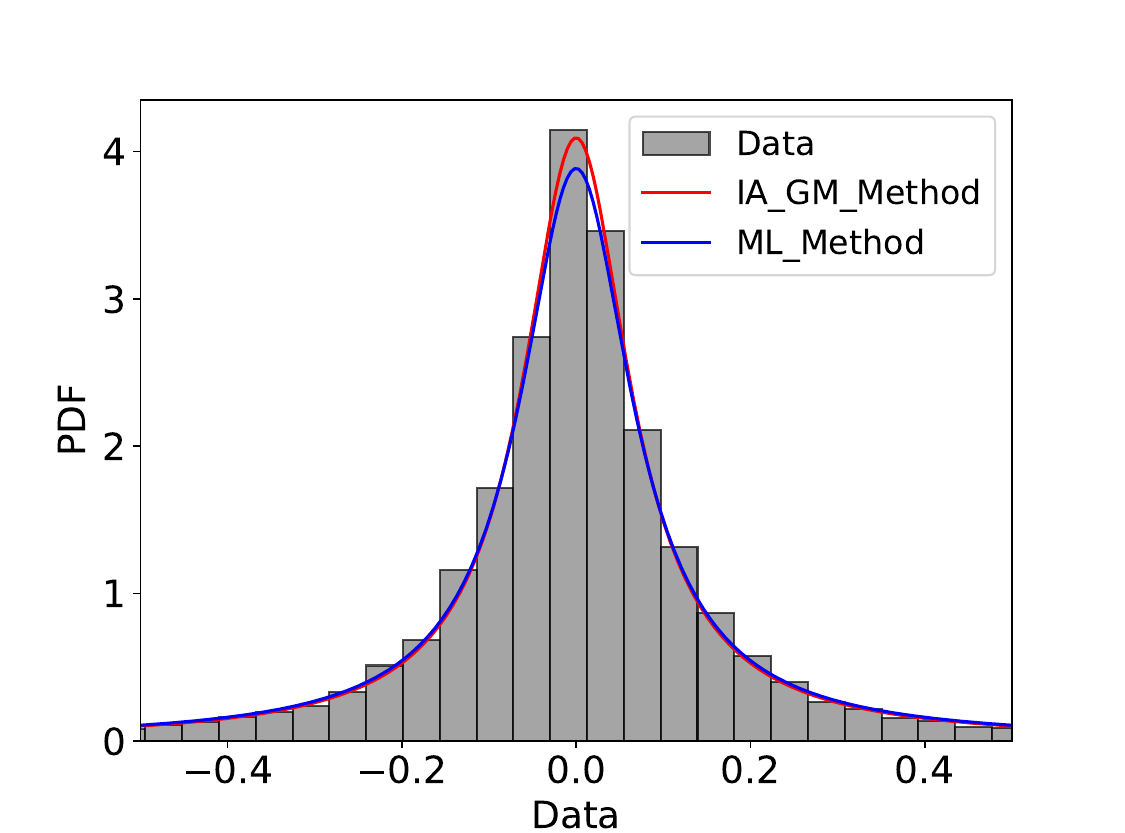}
    \caption{}
    \label{fig:map}
\end{subfigure}
\end{figure}

\FloatBarrier

\section{Conclusion}
In this paper, we improve and evaluate the performance of the Independent Approximates (IA), a novel estimation method for obtaining accurate statistical measures from heavy-tailed distributions, particularly within the framework of the coupled exponential family of distributions. We proved that IA may be used as a maximum likelihood estimator for the generalized Pareto and Student's $t$ distributions by grouping independent samples into n-tuples and selecting those that are approximately equal. The basis of the IA method is the estimation of the moments of the 2nd and 3rd powers of the underlying distribution, which guaranteed to have first and second-order moments, respectively. Samples from the 2nd (or 3rd) power are obtained by filtering pairs (or triplets) that are approximately equal from the original distribution.

The IA methodology provides evidence that the escort distribution of NSM is the distribution of \textit{q} random variables sharing the same state. Although NSM has made progress toward a statistical theory of complex systems, its reliance on a secondary property, the \textit{q} number of independent equals, has limited its interpretative ability. Here we define the heavy-tailed distributions and the statistical methods in terms of the degree of nonlinear coupling, \(\kappa\), which is also the distribution shape parameter. Given the direct connection between the nonlinearity of a system and its complexity, the coupling could be considered a measure of the statistical complexity of a system. Clarity about the physical model represented by the Independent-Equals distribution is expected to expand its applications, including recent improvements in the training of robust machine learning algorithms.

Simulation studies and empirical applications, such as the Coherent Noise Model (CNM) and the Standard Map, showed that the IA technique was robustness and accuracy comparable to traditional maximum likelihood approaches. In particular, the IA approach demonstrated competitive performance, particularly in scenarios in which strong nonlinearity leads to limit distributions with very slow decaying tails. Future research could explore the extension of IA methods to multidimensional distributions, incorporate dynamic thresholds in time-evolving systems, and investigate its use in real-world datasets from domains such as finance, neuroscience, and geophysics, where heavy tails and complexity are prevalent.
\section*{Acknowledgements} 
U.T. is a member of the Science Academy, Bilim Akademisi, Turkey and acknowledges partial support from 
TUBITAK (Turkish Agency) under the Research Project number 121F269. We thank Christian Beck and Grzegorz Wilk for correspondence on the derivation of superstatistics.

\bibliography{MICS}

\begin{appendices}

\newpage
\section{Comparison of the coupled and $q$-distributions}\label{secA1}

The field of nonextensive statistical mechanics grew out of analysis of complex systems defined by an escort probability $p_i^{(q)}\equiv\frac{p_i^q}{\sum_{j=1}^{N} p_j^q}$; however, the choice of $q$ as a defining parameter created difficulties in a) relating results to established principles within the statistical analysis of scale-shape distributions, and b) explaining physical theories of complex systems, whose principle property is nonlinear dynamics. Recasting results in nonextensive statistical mechanics, such as this contribution regarding estimation using Independent Approximates, has the potential to integrate advances in modeling complex systems into more establish approaches of statistical analysis and to clarify physical implications. In this appendix, we show that the scale-shape definition of the coupled distributions provides clear mathematical properties which are obscured when using the $\beta$-$q$ translation.  
\begin{table}[!ht]
\caption{Comparison of mathematical properties of the Coupled Gaussian and $q$-Gaussian representations. The results are for the heavy-tailed domain in which $0<\kappa<\infty$ and $1<q<3$.}
\label{dist-prop}
\begin{tabular}{c|c|c|c}
\hline
Property  & \multicolumn{1}{c|}{\begin{tabular}[c]{@{}c@{}}Mathematical Description\end{tabular}} & \multicolumn{1}{c|}{Coupled Gaussian} & \multicolumn{1}{c}{$q$-Gaussian} \\ \hline
\begin{tabular}[c]{@{}l@{}}Inflection Point\end{tabular}   & $f''(x)=0$ & $x=\dfrac{\pm\sigma}{\sqrt{1+2\kappa}}$ & $x=\dfrac{\pm1}{\sqrt{\beta(q+1)}}$  \\ \hline
\begin{tabular}[c]{@{}c@{}}Inflection of \\ Derivative \end{tabular} & $f^{(3)}(x)=0$ & $x=\dfrac{\pm\sigma\sqrt{3}}{\sqrt{1+2\kappa}}$ & $x=\dfrac{\pm\sqrt{3}}{\sqrt{\beta(q+1)}}$ \\ \hline
\begin{tabular}[c]{@{}c@{}}Half asymptotic slope\\ of Log-Log plot\end{tabular} & 
\begin{tabular}[c]{@{}c@{}}$x=e^u$ \\ $g'(u)=\dfrac{e^uf'(e^u)}{f(e^u)}=\dfrac{1}{2} \lim\limits_{u\to\infty} g'(u)$\end{tabular} &  $x=\dfrac{\sigma}{\sqrt{\kappa}}$ & $x=\sqrt{\dfrac{1}{\beta(q-1)}}$  \\ \hline
\begin{tabular}[c]{@{}c@{}}Log-Log Derivative\\ is -1\end{tabular} &  \begin{tabular}[c]{@{}c@{}}$x=e^u$ \\ $g'(u)=-1$\end{tabular}  &   $x=\sigma$  &   $x=\sqrt{\dfrac{1}{\beta(3-q)}}$  \\ \hline
\end{tabular}
\end{table}

Table \ref{dist-prop} shows a comparison of basic mathematical properties of the coupled Gaussian and $q$-Gaussian distributions.  The inflection point of the pdf and its derivative have comparible complexity with the coupled and $q$-Gaussian representations. However, key points of the Log-Log plot of the pdfs show a simplification for the coupled Gaussian.  The point at which the slope of the log-log plot equals $-1$ is always the scale of the distribution, $x=\sigma$. The translation to the $q$-Gaussian does not have this clarity, $x=\sqrt{\dfrac{1}{\beta(3-q)}}$. Likewise, the point at which the log-log slope is half the slope at the asymptotic limit is simply, $|x|=\dfrac{\sigma}{\sqrt{\kappa}}$; whereas, the $q-$Gaussian has the $-1$ constant which complicates interpretation, $|x|=\sqrt{\dfrac{1}{\beta(q-1)}}$.

The significance of the scale for properties of nonadditive entropy was introduced in \cite{nelsonAverageUncertaintySystems2017} and discussed further in \cite{nelsonOpenProblemsNonextensive2024}. For the coupled distributions the density at the scale is equal to an average density defined by the translation of the coupled entropy from the log-density domain back to the density domain.  That is, for the coupled exponential ($\alpha=1$) and the coupled Gaussian ($\alpha=2$) (\ref{CD-PDF}) the density at the scale is equal to the following generalized mean of the distribution
\begin{equation}
    f(\sigma;\kappa,\alpha)=\left(\int_{x\in X} f(x;\kappa,\alpha)^{1+\frac{\alpha\kappa}{1+\kappa}} \,dx \right)^\frac{1+\kappa}{\alpha\kappa}.
\end{equation}
Given the mathematical significance of the coupled distributions at the scale, the physical interpretation of heavy-tailed phenomena should also simplify with this representation. 

The difficulty in interpreting $q$-statistics is illustrated by the interpretations of superstatistics by Beck and Cohen \cite{beckSuperstatistics2003}, and Wilk and Włodarczyk \cite{Wilk2000}. Solving for a generalization of the Boltzmann factor and then normalizing the solution, these investigators concluded that a random variable with a fluctuating standard deviation can be modeled as $q$-exponential distribution with $q$ proportional to the relative variance. The $q$-exponential distribution is exact if the variations $\beta$ are distributed as a gamma distribution. Furthermore, via Taylor series analysis this result is shown to be universal for small fluctuations regardless of the large-scale distribution of the fluctuations. 

However, examination of the result reveals a couple of problems.  First, the relative variance has a domain from 0 to infinity while the heavy-tailed $q$-exponentials can only be normalized from $1<q<2$. Secondly, the superstatistics derivation utilized the Boltzmann factor $e^{-\beta E}$ which neglects the normalization. These issues led to the definition of Type B superstistics, in which the variation in the normalization is included. While the result is still a $q$ or coupled exponential distribution, the relative variance is now equal to the coupling $\kappa$, which like the relative variance has domain over the positive reals for heavy-tailed distributions. 

Anticipating the coupled exponential distribution solution we use $\sigma'$ as the variable scale and the following parameters for the mean and relative variance of the inverse scale:

\begin{equation}
    \frac{1}{\sigma} = \left< \frac{1}{\sigma'}\right>, \quad
  \kappa = \frac{\left< \frac{1}{\sigma'^2}\right> - \left< \frac{1}{\sigma'}\right>^2}{\left< \frac{1}{\sigma'}\right>^2}.
\end{equation}

With the normalization treated as a constant, the Type A superstatistics result is
\begin{equation}
    C \left( 1 + \kappa \frac{x}{\sigma}\right)^{-\frac{1}{\kappa}} = C \int_{0}^{\infty} e^{-\frac{x}{\sigma'}}\frac{ \left( \sigma \kappa^{-1}\right)^\frac{1}{\kappa}}{\Gamma \left(\frac{1}{\kappa}\right)} \left(\frac{1}{\sigma'}\right)^{\frac{1}{\kappa}-1} e^{-\frac{\sigma}{\kappa}\frac{1}{\sigma'}} \,d\left(\frac{1}{\sigma'}\right).
\end{equation}
From this result, the $q$-exponential distribution is formed by the substitutions,
\begin{equation}
    \kappa=q-1, \quad \sigma = \frac{1}{\beta}, \quad C=\beta (2-q),
\end{equation}
which provided the interpretation that the relative variance is proportional to $q$. In contrast, if the normalization of the exponential distribution is included the Type B superstastics result is
\begin{equation}
    \frac{1}{\sigma} \left( 1 + \kappa \frac{x}{\sigma}\right)^{-\left(\frac{1}{\kappa}+1\right)} =  \int_{0}^{\infty} \frac{1}{\sigma'}e^{-\frac{x}{\sigma'}}\frac{ \left( \sigma \kappa^{-1}\right)^\frac{1}{\kappa}}{\Gamma \left(\frac{1}{\kappa}\right)} \left(\frac{1}{\sigma'}\right)^{\frac{1}{\kappa}-1} e^{-\frac{\sigma}{\kappa}\frac{1}{\sigma'}} \,d\left(\frac{1}{\sigma'}\right).
\end{equation}
Now, the result is precisely the normalized coupled exponential distribution and the relative variance is equal to the coupling.  The translation to $q$ using (\ref{q-coupling}) is $\kappa=\frac{q-1}{2-q}$. While there may be some applications of Type A superstatistics relevant to generalizations of the Boltzmann factor, it cannot be used to derive a distribution in which the variation of the normalization was neglected. 

\section{}\label{secA2}

\begin{table}[!h]
\captionsetup{justification=centering, width=\textwidth}
\caption{Empirical mean square errors (MSE) of parameter estimates for data generated from a Coupled Gaussian distribution with a sample size $n = 1000$. The scale $\sigma=0.5$ and the shape $\kappa$ varies as indicated.}
\begin{tabular}{l|llll}
\hline
\multirow{3}{*}{$\kappa$} & \multicolumn{4}{c}{\begin{tabular}[c]{@{}c@{}}$(MSE\pm SD)$x$10^{-3}$ \end{tabular}} \\ \cmidrule{2-5} 
                       & \multicolumn{2}{c|}{IA\_GM}                                                    & \multicolumn{2}{c}{ML}                                 \\ \cmidrule{2-5} 
                       & \multicolumn{1}{l|}{$\hat{\kappa}$}            & \multicolumn{1}{l|}{$\hat{\sigma}$}     & \multicolumn{1}{l|}{$\hat{\kappa}$}  & $\hat{\sigma}$  \\ \hline
0.25                   & \multicolumn{1}{l|}{$71\pm5$}         & \multicolumn{1}{l|}{$8\pm2$}        & \multicolumn{1}{l|}{$19\pm3$}        & $17\pm3$        \\ \hline
0.5                    & \multicolumn{1}{l|}{$30\pm10$}         & \multicolumn{1}{l|}{$5\pm3$}        & \multicolumn{1}{l|}{$3\pm6$}        & $12\pm3$        \\ \hline
1 & \multicolumn{1}{l|}{$20\pm20$} & \multicolumn{1}{l|}{$8\pm4$}        & \multicolumn{1}{l|}{$10\pm8$}        & $10\pm4$  \\ \hline
1.25 & \multicolumn{1}{l|}{$10\pm20$}  & \multicolumn{1}{l|}{$21\pm5$}    & \multicolumn{1}{l|}{$10\pm10$}  & $11\pm3$        \\ \hline
2   & \multicolumn{1}{l|}{$49\pm25$} & \multicolumn{1}{l|}{$41\pm6$} & \multicolumn{1}{l|}{$20\pm10$}        & $13\pm3$        \\ \hline
\end{tabular}
\end{table}

\begin{table}[!h]
\captionsetup{justification=centering, width=\textwidth}
\caption{Goodness-of-Fit Metrics for the Coupled Gaussian Distribution under Various Methods and Shape Parameters $\kappa$ with a Fixed Scale Parameter ($\sigma = 0.5$) and sample size$=1000$.}
\label{tab:Goodness-t}
\begin{tabular}{llllll}
\hline
\multicolumn{6}{c}{Coupled Gaussian} \\ \hline
\multicolumn{6}{c}{Average deviation (AD)    $\sigma=0.5$} \\ \hline
\multicolumn{1}{l|}{Method\textbackslash $\kappa$} & \multicolumn{1}{l|}{0.25} & \multicolumn{1}{l|}{0.5} & \multicolumn{1}{l|}{1} & \multicolumn{1}{l|}{1.25} & 2 \\ \hline
\multicolumn{1}{l|}{IA (Geometric mean)} & \multicolumn{1}{l|}{0.067} & \multicolumn{1}{l|}{0.12} & \multicolumn{1}{l|}{0.69} & \multicolumn{1}{l|}{1.7} & 163 \\ \hline
\multicolumn{1}{l|}{ML} & \multicolumn{1}{l|}{0.051} & \multicolumn{1}{l|}{0.11} & \multicolumn{1}{l|}{0.78} & \multicolumn{1}{l|}{2.61} & 150\\ \hline
\multicolumn{6}{c}{Cramer–von Mises (CvM)  $\sigma=0.5$} \\ \hline
\multicolumn{1}{l|}{Method\textbackslash $\kappa$} & \multicolumn{1}{l|}{0.25} & \multicolumn{1}{l|}{0.5} & \multicolumn{1}{l|}{1} & \multicolumn{1}{l|}{1.25} & 2 \\ \hline
\multicolumn{1}{l|}{IA (Geometric mean)} & \multicolumn{1}{l|}{0.012} & \multicolumn{1}{l|}{0.023} & \multicolumn{1}{l|}{0.025} & \multicolumn{1}{l|}{0.022} & 0.023 \\ \hline
\multicolumn{1}{l|}{ML} & \multicolumn{1}{l|}{0.13} & \multicolumn{1}{l|}{0.15} & \multicolumn{1}{l|}{0.14} & \multicolumn{1}{l|}{0.13} & 0.093 \\ \hline
\multicolumn{6}{c}{NLL} \\ \hline
\multicolumn{1}{l|}{Method\textbackslash $\kappa$} & \multicolumn{1}{l|}{0.25} & \multicolumn{1}{l|}{0.5} & \multicolumn{1}{l|}{1} & \multicolumn{1}{l|}{1.25} & 2 \\ \hline
\multicolumn{1}{l|}{IA (Geometric mean)} & \multicolumn{1}{l|}{1,900} & \multicolumn{1}{l|}{1,500} & \multicolumn{1}{l|}{1,800} & \multicolumn{1}{l|}{2,100} & 3,800 \\ \hline
\multicolumn{1}{l|}{ML} & \multicolumn{1}{l|}{1,600} & \multicolumn{1}{l|}{1,500} & \multicolumn{1}{l|}{1,800} & \multicolumn{1}{l|}{2,100} & 3,700 \\ \hline
\end{tabular}
\end{table}

\begin{table}[!h]
\captionsetup{justification=centering, width=\textwidth}
\caption{Empirical mean square errors (MSE) of parameter estimates for data generated from a Coupled Gaussian distribution with a sample size $n = 100$. The scale $\sigma=0.5$ and the shape $\kappa$ varies as indicated.}
\begin{tabular}{l|llll}
\hline
\multirow{3}{*}{$\kappa$} & \multicolumn{4}{c}{\begin{tabular}[c]{@{}c@{}}$(MSE\pm SD)$x$10^{-3}$\end{tabular}} \\ \cmidrule{2-5} 
                       & \multicolumn{2}{c|}{IA\_GM}                                                    & \multicolumn{2}{c}{ML}                                 \\ \cmidrule{2-5} 
                       & \multicolumn{1}{l|}{$\hat{\kappa}$}            & \multicolumn{1}{l|}{$\hat{\sigma}$}     & \multicolumn{1}{l|}{$\hat{\kappa}$}  & $\hat{\sigma}$  \\ \hline
0.25                   & \multicolumn{1}{l|}{$53\pm8$}         & \multicolumn{1}{l|}{$81\pm8$}        & \multicolumn{1}{l|}{$10\pm9$}        & $20\pm3$        \\ \hline
0.5                    & \multicolumn{1}{l|}{$20\pm10$}  & \multicolumn{1}{l|}{$60\pm4$}        & \multicolumn{1}{l|}{$20\pm10$}        & $17\pm3$        \\ \hline
1 & \multicolumn{1}{l|}{$216\pm9$} & \multicolumn{1}{l|}{$20\pm10$}        & \multicolumn{1}{l|}{$60\pm20$}        & $13\pm3$  \\ \hline
1.25 & \multicolumn{1}{l|}{$450\pm10$}  & \multicolumn{1}{l|}{$118\pm2$}    & \multicolumn{1}{l|}{$20\pm20$}  & $10\pm2$        \\ \hline
2   & \multicolumn{1}{l|}{$150\pm20$} & \multicolumn{1}{l|}{$14\pm3$} & \multicolumn{1}{l|}{$120\pm20$} & $62\pm3$        \\ \hline
\end{tabular}
\end{table}

\begin{table}[!h]
\captionsetup{justification=centering, width=\textwidth}
\caption{Goodness-of-Fit Metrics for the Coupled Gaussian Distribution under Various Methods and Shape Parameters $\kappa$ with a Fixed Scale Parameter ($\sigma = 0.5$) and sample size$=100$.}
\label{tab:Goodness-t}
\begin{tabular}{llllll}
\hline
\multicolumn{6}{c}{Coupled Gaussian} \\ \hline
\multicolumn{6}{c}{Average deviation (AD)    $\sigma=0.5$} \\ \hline
\multicolumn{1}{l|}{Method\textbackslash $\kappa$} & \multicolumn{1}{l|}{0.25} & \multicolumn{1}{l|}{0.5} & \multicolumn{1}{l|}{1} & \multicolumn{1}{l|}{1.25} & 2 \\ \hline
\multicolumn{1}{l|}{IA (Geometric mean)} & \multicolumn{1}{l|}{0.081} & \multicolumn{1}{l|}{0.19} & \multicolumn{1}{l|}{1.5} & \multicolumn{1}{l|}{2.7} & 160 \\ \hline
\multicolumn{1}{l|}{ML} & \multicolumn{1}{l|}{0.21} & \multicolumn{1}{l|}{0.45} & \multicolumn{1}{l|}{2.6} & \multicolumn{1}{l|}{7.6} & 270\\ \hline
\multicolumn{6}{c}{Cramer–von Mises (CvM)  $\sigma=0.5$} \\ \hline
\multicolumn{1}{l|}{Method\textbackslash $\kappa$} & \multicolumn{1}{l|}{0.25} & \multicolumn{1}{l|}{0.5} & \multicolumn{1}{l|}{1} & \multicolumn{1}{l|}{1.25} & 2 \\ \hline
\multicolumn{1}{l|}{IA (Geometric mean)} & \multicolumn{1}{l|}{0.23} & \multicolumn{1}{l|}{0.22} & \multicolumn{1}{l|}{0.24} & \multicolumn{1}{l|}{0.20} & 0.48 \\ \hline
\multicolumn{1}{l|}{ML} & \multicolumn{1}{l|}{0.31} & \multicolumn{1}{l|}{0.30} & \multicolumn{1}{l|}{0.29} & \multicolumn{1}{l|}{0.26} & 0.38 \\ \hline
\multicolumn{6}{c}{NLL} \\ \hline
\multicolumn{1}{l|}{Method\textbackslash $\kappa$} & \multicolumn{1}{l|}{0.25} & \multicolumn{1}{l|}{0.5} & \multicolumn{1}{l|}{1} & \multicolumn{1}{l|}{1.25} & 2 \\ \hline
\multicolumn{1}{l|}{IA (Geometric mean)} & \multicolumn{1}{l|}{160} & \multicolumn{1}{l|}{150} & \multicolumn{1}{l|}{210} & \multicolumn{1}{l|}{250} & 250 \\ \hline
\multicolumn{1}{l|}{ML} & \multicolumn{1}{l|}{180} & \multicolumn{1}{l|}{160} & \multicolumn{1}{l|}{210} & \multicolumn{1}{l|}{240} & 260 \\ \hline
\end{tabular}
\end{table}

\begin{table}[!h]
\captionsetup{justification=centering, width=\textwidth}
\caption{Empirical mean square errors (MSE) of parameter estimates for data generated from a Coupled Exponential distribution with a sample size $n = 1000$. The scale $\sigma=0.5$ and the shape $\kappa$ varies as indicated.}
\begin{tabular}{l|llllll}
\hline
\multicolumn{1}{c|}{\multirow{3}{*}{$\kappa$}} & \multicolumn{6}{c}{\begin{tabular}[c]{@{}c@{}}$(MSE\pm SD)$x$10^{-3}$\end{tabular}}                                  \\ \cmidrule{2-7} 
\multicolumn{1}{c|}{}                          & \multicolumn{2}{c|}{IA\_GM}                             & \multicolumn{2}{c|}{IA}                                 & \multicolumn{2}{l}{ML}             \\ \cmidrule{2-7} 
\multicolumn{1}{c|}{}                          & \multicolumn{1}{l|}{$\hat{\kappa}$} & \multicolumn{1}{l|}{$\hat{\sigma}$} & \multicolumn{1}{l|}{$\hat{\kappa}$} & \multicolumn{1}{l|}{$\hat{\sigma}$} & \multicolumn{1}{l|}{$\hat{\kappa}$} & $\hat{\sigma}$ \\ \hline
0.25& \multicolumn{1}{l|}{6±2} & \multicolumn{1}{l|}{1.0±0.1} & \multicolumn{1}{l|}{0±4} & \multicolumn{1}{l|}{1±3} & \multicolumn{1}{l|}{22±4} & 5±4 \\ \hline
0.5& \multicolumn{1}{l|}{117±5} & \multicolumn{1}{l|}{34.0±0.3} & \multicolumn{1}{l|}{30±3} & \multicolumn{1}{l|}{33±5} & \multicolumn{1}{l|}{20±10} & 4±3 \\ \hline
1& \multicolumn{1}{l|}{1±2}  & \multicolumn{1}{l|}{2.0±0.2}& \multicolumn{1}{l|}{27±20} & \multicolumn{1}{l|}{68±3} & \multicolumn{1}{l|}{20±10} & 4±5 \\ \hline
1.25& \multicolumn{1}{l|}{30±10} & \multicolumn{1}{l|}{7.0±0.1} & \multicolumn{1}{l|}{9±10} & \multicolumn{1}{l|}{13±4} & \multicolumn{1}{l|}{30±10} & 3±2 \\ \hline
2& \multicolumn{1}{l|}{80±10} & \multicolumn{1}{l|}{24.0±0.2} & \multicolumn{1}{l|}{10±20} & \multicolumn{1}{l|}{50±10} & \multicolumn{1}{l|}{30±10} & 2±2 \\ \hline
\end{tabular}
\end{table}

\begin{table}[!h]
\captionsetup{justification=centering, width=\textwidth}
\caption{Goodness-of-Fit Metrics for the Coupled Exponential Distribution under Various Methods and Shape Parameters $\kappa$ with a Fixed Scale Parameter ($\sigma = 0.5$) and sample size=$1000$.}
\label{tab:Goodness-GP}
\begin{tabular}{llllll}
\hline
\multicolumn{6}{c}{Coupled Exponential} \\ \hline
\multicolumn{6}{c}{Average deviation (AD)    $\sigma=0.5$} \\ \hline
\multicolumn{1}{l|}{Method\textbackslash $\kappa$} & \multicolumn{1}{l|}{0.25} & \multicolumn{1}{l|}{0.5} & \multicolumn{1}{l|}{1} & \multicolumn{1}{l|}{1.25} & 2 \\ \hline
\multicolumn{1}{l|}{IA (Geometric mean)} & \multicolumn{1}{l|}{0.004} & \multicolumn{1}{l|}{0.12} & \multicolumn{1}{l|}{0.021} & \multicolumn{1}{l|}{0.79} & 140 \\ \hline
\multicolumn{1}{l|}{IA (Triplets)} & \multicolumn{1}{l|}{0.002} & \multicolumn{1}{l|}{0.033} & \multicolumn{1}{l|}{0.17} & \multicolumn{1}{l|}{0.16} & 4.1 \\ \hline
\multicolumn{1}{l|}{ML} & \multicolumn{1}{l|}{0.011} & \multicolumn{1}{l|}{0.023} & \multicolumn{1}{l|}{0.17} & \multicolumn{1}{l|}{0.53} & 29 \\ \hline
\multicolumn{6}{c}{Cramer–von Mises (CvM)  $\sigma=0.5$} \\ \hline
\multicolumn{1}{l|}{Method\textbackslash $\kappa$} & \multicolumn{1}{l|}{0.25} & \multicolumn{1}{l|}{0.5} & \multicolumn{1}{l|}{1} & \multicolumn{1}{l|}{1.25} & 2 \\ \hline
\multicolumn{1}{l|}{IA (Geometric mean)} & \multicolumn{1}{l|}{0.0004} & \multicolumn{1}{l|}{0.066} & \multicolumn{1}{l|}{0.0005} & \multicolumn{1}{l|}{0.002} & 0.016 \\ \hline
\multicolumn{1}{l|}{IA (Triplets)} & \multicolumn{1}{l|}{0.0004} & \multicolumn{1}{l|}{0.13} & \multicolumn{1}{l|}{0.45} & \multicolumn{1}{l|}{0.014} & 0.13 \\ \hline
\multicolumn{1}{l|}{ML} & \multicolumn{1}{l|}{0.003} & \multicolumn{1}{l|}{0.002} & \multicolumn{1}{l|}{0.002} & \multicolumn{1}{l|}{0.002} & 0.003 \\ \hline
\multicolumn{6}{c}{NLL} \\ \hline
\multicolumn{1}{l|}{Method\textbackslash $\kappa$} & \multicolumn{1}{l|}{0.25} & \multicolumn{1}{l|}{0.5} & \multicolumn{1}{l|}{1} & \multicolumn{1}{l|}{1.25} & 2 \\ \hline
\multicolumn{1}{l|}{IA (Geometric mean)} & \multicolumn{1}{l|}{1100} & \multicolumn{1}{l|}{1610} & \multicolumn{1}{l|}{2600} & \multicolumn{1}{l|}{3100} & 4600 \\ \hline
\multicolumn{1}{l|}{IA (Triplets)} & \multicolumn{1}{l|}{1100} & \multicolumn{1}{l|}{1600} & \multicolumn{1}{l|}{2600} & \multicolumn{1}{l|}{3100} & 4600 \\ \hline
\multicolumn{1}{l|}{ML} & \multicolumn{1}{l|}{1100} & \multicolumn{1}{l|}{1600} & \multicolumn{1}{l|}{2600} & \multicolumn{1}{l|}{3100} & 4600 \\ \hline
\end{tabular}
\end{table}

\begin{table}[!h]
\captionsetup{justification=centering, width=\textwidth}
\caption{Empirical mean square errors (MSE) of parameter estimates for data generated from a Coupled Exponential distribution with a sample size $n = 100$. The scale $\sigma=0.5$ and the shape $\kappa$ varies as indicated.}
\begin{tabular}{l|llllll}
\hline
\multicolumn{1}{c|}{\multirow{3}{*}{$\kappa$}} & \multicolumn{6}{c}{\begin{tabular}[c]{@{}c@{}}$(MSE\pm SD)$x$10^{-3}$\end{tabular}}                                  \\ \cmidrule{2-7} 
\multicolumn{1}{c|}{}                          & \multicolumn{2}{c|}{IA\_GM}                             & \multicolumn{2}{c|}{IA}                                 & \multicolumn{2}{l}{ML}             \\ \cmidrule{2-7} 
\multicolumn{1}{c|}{}                          & \multicolumn{1}{l|}{$\hat{\kappa}$} & \multicolumn{1}{l|}{$\hat{\sigma}$} & \multicolumn{1}{l|}{$\hat{\kappa}$} & \multicolumn{1}{l|}{$\hat{\sigma}$} & \multicolumn{1}{l|}{$\hat{\kappa}$} & $\hat{\sigma}$ \\ \hline
0.25& \multicolumn{1}{l|}{6±3} & \multicolumn{1}{l|}{10±1} & \multicolumn{1}{l|}{114±5} & \multicolumn{1}{l|}{77±3} & \multicolumn{1}{l|}{113±5} & 20±5 \\ \hline
0.5& \multicolumn{1}{l|}{125±6} & \multicolumn{1}{l|}{20±1} & \multicolumn{1}{l|}{250±10} & \multicolumn{1}{l|}{253±4} & \multicolumn{1}{l|}{121±6} & 19±4 \\ \hline
1& \multicolumn{1}{l|}{40±30}  & \multicolumn{1}{l|}{10±1}& \multicolumn{1}{l|}{50±20} & \multicolumn{1}{l|}{88±2} & \multicolumn{1}{l|}{140±10} & 16±5 \\ \hline
1.25& \multicolumn{1}{l|}{0±10} & \multicolumn{1}{l|}{32±2} & \multicolumn{1}{l|}{410±10} & \multicolumn{1}{l|}{264±3} & \multicolumn{1}{l|}{10±20} & 13±2 \\ \hline
2& \multicolumn{1}{l|}{10±20} & \multicolumn{1}{l|}{83±3} & \multicolumn{1}{l|}{30±10} & \multicolumn{1}{l|}{157±5} & \multicolumn{1}{l|}{200±20} & 27±2\\ \hline
\end{tabular}
\end{table}

\begin{table}[!h]
\captionsetup{justification=centering, width=\textwidth}
\caption{Goodness-of-Fit Metrics for the Coupled Exponential Distribution under Various Methods and Shape Parameters $\kappa$ with a Fixed Scale Parameter ($\sigma = 0.5$) and sample size=$100$}
\label{tab:Goodness-GP}
\begin{tabular}{llllll}
\hline
\multicolumn{6}{c}{Coupled Exponential} \\ \hline
\multicolumn{6}{c}{Average deviation (AD)    $\sigma=0.5$} \\ \hline
\multicolumn{1}{l|}{Method\textbackslash $\kappa$} & \multicolumn{1}{l|}{0.25} & \multicolumn{1}{l|}{0.5} & \multicolumn{1}{l|}{1} & \multicolumn{1}{l|}{1.25} & 2 \\ \hline
\multicolumn{1}{l|}{IA (Geometric mean)} & \multicolumn{1}{l|}{0.016} & \multicolumn{1}{l|}{0.089} & \multicolumn{1}{l|}{0.18} & \multicolumn{1}{l|}{0.16} & 4.0 \\ \hline
\multicolumn{1}{l|}{IA (Triplets)} & \multicolumn{1}{l|}{0.040} & \multicolumn{1}{l|}{0.16} & \multicolumn{1}{l|}{0.13} & \multicolumn{1}{l|}{1.3} & 5.5 \\ \hline
\multicolumn{1}{l|}{ML} & \multicolumn{1}{l|}{0.046} & \multicolumn{1}{l|}{0.087} & \multicolumn{1}{l|}{0.35} & \multicolumn{1}{l|}{0.70} & 8.6 \\ \hline
\multicolumn{6}{c}{Cramer–von Mises (CvM)  $\sigma=0.5$} \\ \hline
\multicolumn{1}{l|}{Method\textbackslash $\kappa$} & \multicolumn{1}{l|}{0.25} & \multicolumn{1}{l|}{0.5} & \multicolumn{1}{l|}{1} & \multicolumn{1}{l|}{1.25} & 2 \\ \hline
\multicolumn{1}{l|}{IA (Geometric mean)} & \multicolumn{1}{l|}{0.003} & \multicolumn{1}{l|}{0.036} & \multicolumn{1}{l|}{0.006} & \multicolumn{1}{l|}{0.008} & 0.018 \\ \hline
\multicolumn{1}{l|}{IA (Triplets)} & \multicolumn{1}{l|}{0.031} & \multicolumn{1}{l|}{0.24} & \multicolumn{1}{l|}{0.036} & \multicolumn{1}{l|}{0.43} & 0.074 \\ \hline
\multicolumn{1}{l|}{ML} & \multicolumn{1}{l|}{0.005} & \multicolumn{1}{l|}{0.006} & \multicolumn{1}{l|}{0.006} & \multicolumn{1}{l|}{0.007} & 0.15 \\ \hline
\multicolumn{6}{c}{NLL} \\ \hline
\multicolumn{1}{l|}{Method\textbackslash $\kappa$} & \multicolumn{1}{l|}{0.25} & \multicolumn{1}{l|}{0.5} & \multicolumn{1}{l|}{1} & \multicolumn{1}{l|}{1.25} & 2 \\ \hline
\multicolumn{1}{l|}{IA (Geometric mean)} & \multicolumn{1}{l|}{49} & \multicolumn{1}{l|}{73} & \multicolumn{1}{l|}{120} & \multicolumn{1}{l|}{140} & 210 \\ \hline
\multicolumn{1}{l|}{IA (Triplets)} & \multicolumn{1}{l|}{49} & \multicolumn{1}{l|}{75} & \multicolumn{1}{l|}{120} & \multicolumn{1}{l|}{150} & 150 \\ \hline
\multicolumn{1}{l|}{ML} & \multicolumn{1}{l|}{49} & \multicolumn{1}{l|}{72} & \multicolumn{1}{l|}{120} & \multicolumn{1}{l|}{140} & 150 \\ \hline
\end{tabular}
\end{table}


\end{appendices}



\end{document}